\newcommand*{\scrH}{\mathord{\mathcal{H}}}%
\newcommand*{\scrC}{\mathord{\mathcal{C}}}%
\newcommand*{\scrB}{\mathord{\mathcal{B}}}%
\newcommand*{\AAA}{{\mathbb{A}}}
\newcommand*{\NN}{{\mathbb{N}}}
\newcommand*{\ZZ}{{\mathbb{Z}}}
\newcommand*{\QQ}{{\mathbb{Q}}}
\newcommand*{\RR}{{\mathbb{R}}}
\newcommand*{\scrE}{\mathord{\mathcal{E}}}%
\newcommand*{\scrT}{\mathord{\mathcal{T}}}%
\newcommand*{\scrK}{\mathord{\mathcal{K}}}%
\newcommand*{\scrN}{\mathord{\mathcal{N}}}%
\newcommand*{\scrP}{\mathord{\mathcal{P}}}
\newcommand*{\scrS}{\mathord{\mathcal{S}}}
\newcommand*{\Shv}{\mathop{\mathrm{S\lowercase{hv}}}}
\DeclareRobustCommand\openone{\leavevmode\hbox{\small1\normalsize\kern-
.33em1}}%
\newcommand*{\EsubS}{\ensuremath{\mathord{\mathcal{E}_{\scrS}}}}
\newcommand*{\RsubC}{\ensuremath{\mathord{\RR_{\mathrm{C}}}}}
\newcommand*{\RsubD}{\ensuremath{\mathord{\RR_{\mathrm{D}}}}}
\newcommand*{\ZsubD}{\ensuremath{\mathord{\ZZ_{\mathrm{D}}}}}
\newcommand*{\QsubD}{\ensuremath{\mathord{\QQ_{\mathrm{D}}}}}
\newtheorem{theorem}{Theorem}
\newtheorem{definition}{Definition}
\newtheorem{lemma}[definition]{Lemma}
\newtheorem{proposition}[definition]{Proposition}
\newtheorem{corollary}[definition]{Corollary}
\begin{document}
\bibliographystyle{eptcs}

\def\titlerunning{Quantum Foundations}
\def\authorrunning{J.V.Corbett}

\title{A Topos Theory Foundation for Quantum Mechanics}
\author{John V Corbett\\
\email{john.corbett@mq.edu.au}
\institute{Department of Mathematics, Macquarie University, N.S.W. 2109 and\\
The Centre for Time, Philosophy Department, Sydney University, N.S.W. 2006,
Australia }}

\date{\today}



\maketitle
\begin{abstract} The theory of quantum mechanics is examined using non-standard
real numbers, called quantum real numbers (qr-numbers), that are constructed
from standard Hilbert space entities. Our goal is to resolve some of the
paradoxical features of the standard theory by providing the physical attributes
of quantum systems with numerical values that are Dedekind real numbers in the
topos of sheaves on the state space of the quantum system. The measured standard
real number values of a physical attribute are then obtained as constant
qr-number approximations to variable qr-numbers.  Considered as attributes, the
spatial locations of massive quantum particles form non-classical spatial
continua in which a single particle can have a quantum trajectory which passes
through two classically separated slits and the two particles in the Bohm-Bell
experiment stay close to each other in quantum space so that Einstein locality
is retained. \end{abstract}

\section{The quantum real number interpretation of quantum mechanics.}
The quantum real number (qr-number) interpretation of quantum mechanics is based on the claim that the change from classical physics to quantum physics is achieved by changing in the type of numerical values that the physical variables can possess. It assumes that the properties of microscopic entities differ from those of classical because their numerical values are not standard real numbers but are Dedekind real numbers in a spatial topos built upon the quantum state space of the microscopic entities. The Dedekind reals differ from standard real is that each holds true to a non-trivial extent given by an open subsets of the quantum state space. These are its conditions, they replace the individual states of standard quantum theory. The internal  logic is intuitionistic. The ``direct connection between observation properties and properties possessed by the independently existing object'' \cite{cush} is cut, an indirect connection is made through the experimental measurement processes. The interpretation builds on the fact that any experimental measurement has a limited level of accuracy. Within this level of accuracy an attribute's qr-number value is approximated by a standard real number.   

Its mathematical structure is built from elements of standard quantum mechanics: we start from von Neumann's assumption \cite{vonneumann} that to any quantum system we can associate a Hilbert space $\mathcal{H}$. $\mathcal{H}$ is the carrier space of a unitary representation of a Lie group $G$, the symmetry group of the quantum system.  Then, as in the standard interpretation, the physical attributes (measurable qualities) of the quantum system are represented by essentially self-adjoint operators that act on a dense subset  $\mathcal{D} \subset \mathcal{H}$. The set of operators form a $\ast$-algebra $\mathcal A$ and the state space $\EsubS(\mathcal{A}) $ of the system is the space of normalized linear functionals  on $\mathcal A$. The state space has the weak topology generated by the real - valued functions $a_Q  :  \EsubS(\mathcal{A}) \to \RR$ given by $ a_{Q}(\rho)  = Tr( \rho.\hat A)\; :\forall \rho\ \in \EsubS(\mathcal{A})$ and labeled by the operators $\hat A \in \mathcal{A}$. A typical open set is a finite intersection of open sets $\mathcal{N}(\rho_{0}; \hat A;  \epsilon) = \{\rho ;\; |Tr \rho \hat A - Tr \rho_{0}\hat A| <\epsilon\}$, for $\hat A \in \mathcal{A},\; \rho_{0} \in \EsubS(A),\; \epsilon > 0$.  The basic open sets are $\nu(\rho_0 ; \delta) = \{ \rho | Tr |\rho_{0} - \rho| < \delta\}$, when $\mathcal{A}$ is a representation  $dU(\mathcal{E}(\mathcal{G}))$ of the enveloping algebra of a Lie group $G$. 

We have argued elsewhere that many of the conceptual enigmas of quantum theory, including the measurement problem \cite{durt2}, the double slit \cite{JVC2} and the non-locality of entangled systems \cite{JVC1}, result from trying to fit the microscopic phenomena into a conceptual framework in which standard real numbers are taken to be the only possible quantitative values for the attributes of quantum systems. We claim that it is as unreasonable to assume that the only numerical values of physical quantities are classical real numbers as it is to assume that the only geometry of space, or space-time, is Euclidean. The qr-number interpretation assumes that any physical quality, represented by an operator $\hat A \in \mathcal{A}$, always has a  qr-number value, in the form of a real - valued function $a_Q(U)\ = a_{Q}|_{U}$, where $U$ is an open subset of $\EsubS(\mathcal{A})$. $U$ is the ontological condition of the system which may be a subset of its experimentally prepared epistemological condition. The
 qr-numbers answers to the conceptual enigmas listed above will be give after the mathematical model has been presented more fully.

\subsection{ Topos theoretical foundations}
In the prologue to \cite{maclane}, Mac Lane and Moerdijk note that ``a topos can be considered both as a 'generalized space' and as a 'generalized universe of sets'.'' Both generalisations are important for the foundations of quantum mechanics. 

The logic of topos theory is intuitionistic which differs from Boolean logic by the absence of LEM, the law of excluded middle, i.e. $a \lor \lnot a = T$ does not hold, and some of the ``paradoxes'' of standard quantum mechanics appear less paradoxical if LEM is excluded. For example, it can be argued that the double slit and Einstein-Podolsky-Rosen paradoxes are due in part from the use of LEM. One of N D Mermin's possible conclusions from his analysis \cite{mermin} of the Einstein-Podolsky-Rosen experiment is to ``Beware of reasoning from what might have happened but didn't''.  In the qr-number analysis of the double slit experiment \cite{JVC2} a quantum particle cannot pass through both slits, $a \land b = F$ so that $b \leq \lnot a$, but  $a \lor b \neq T$ because the particle has qr-number trajectories that cannot be observed passing through either slit, $a$ or $b$ but are observed to have passed through the double slit. The propositions hold to extents, the conditions of the particle, that are  open subsets of $\EsubS(\mathcal{A})$ so that the propositional calculus is a Heyting algebra. 

The topos of sheaves on a topological space has a ring of ``real numbers'' \cite{maclane}.  Isham and Butterfield (2000) claim that most of the ``paradoxes'' of quantum mechanics depend upon ``the discrepancy between - on the one hand -  the values of physical quantities, and - on the other hand  - the results 
of measurements''.  This discrepancy motivates our use of the object of Dedekind reals $\RsubD(\EsubS(\mathcal{A}))$ in the topos $\Shv(\EsubS(\mathcal{A}))$ as the ontic values of physical quantities.    

A distinction between the qualities of a physical system and their quantitative values is assumed. Qualities are the physical attributes of the system and their quantitative values are the numerical values that they take, or have. This distinction was not made in classical physics because the qualities were represented by real numbers which were also their values.  In this interpretation, qualities are represented by operators and their quantitative values are given by qr-numbers \cite{adelman2,durt2}. The algebraic relations between the operators describe relations between the qualities that hold independently of the numerical values that they take; they serve to identify the qualities. The quantitative values of the qualities are given by qr-numbers that depend upon the condition of the system.

The qr-number approach is not the only topos theoretical approach. Topos theory also offers a broad conceptual framework to formalize the notion of {\sl  contextuality}. Isham and Butterfield (2000) \cite{isham} argue that an appeal to an underlying contextual
theory leads to the construction of an appropriate topos, a topos of presheaves on a suitable base category, which represents the different ``viewpoints'' or ``contexts'' in which propositions in quantum mechanics acquire a meaning. Their proposal  doesn't change the real numbers, it changes the truth value of typical propositions of the type ``the value of the quantity $\hat A$ lies in the Borel set $\Delta \subset \RR$'', to the truth values of a set of weaker propositions involving real valued functions defined on the spectrum of the bounded operator $\hat A$.  

It is instructive to compare the qr-number topos theory of quantum mechanics with those of Doering and Isham (2007) \cite{isham2} and of Heunen and Spitters (2007) \cite{heunen}. All start from algebraic formulations of quantum theory;  Doering and Isham use von Neumann  algebras, Heunen and Spitters use $C^{\ast}$-algebras, and we use $O^{\ast}$-algebras. The first two take a non-commutative algebra and cover it by a family of commutative sub-algebras that are objects in a topos of presheaves, geometrically described in \cite{vickers} using spectral bundles. We call this the commutative sub-algebra approach. 

Although both the commutative sub-algebra and the qr-numbers approaches start from Bohr's insight that quantum phenomena are only empirically accessible through classical physics, they have very different understandings of what can count as empirical facts. If physical quantities can be measured with infinite accuracy then only commutative sub-algebras can be measured because Heisenberg's uncertainty principle implies that non-commuting elements can't simultaneously have infinitely accurate real number values. This is the understanding of empirical facts that seems to be accepted in the commutative sub-algebras approach. The qr-numbers approach accepts that empirical numerical facts are only ever known or ever knowable with finite accuracy; as Brouwer had observed, infinite accuracy requires the verification of an infinite number of decimal places. In this view, Heisenberg's uncertainty relations puts limits on the level of accuracy attainable in measuring non-commuting elements but does not prohibit their measurement. Furthermore there is a positivist philosophy flavour to the commutative sub-algebra approach which implies that physical qualities only have numerical values when they are observed. In the qr-number approach the physical qualities have numerical values even when not being observed.

\subsection{The mathematical background}
The mathematical structures underpinning the qr-number interpretation of quantum mechanics are $\ast$-algebras of unbounded operators called $O^{\ast}$-algebras, $\mathcal A$, and their state spaces $\EsubS(\mathcal A)$. The qr-numbers form an object, the Dedekind real numbers, in the spatial topos, $\Shv (\EsubS(\mathcal A))$, of sheaves on the topological space $\EsubS(\mathcal A)$. 

These structures are presented using standard concepts from the theory of operator algebras on Hilbert spaces and point set topology. The discussion of $O^{\ast}$-algebras and their state spaces relies heavily on the books of Schm\"udgen \cite{xxx} and Inoue \cite{inoue}, although it contains some new results specific to the needs of qr-numbers. The discussion of topos theory in the book of Mac Lane and Moerdijk \cite{maclane} is required to fill out the general notion of Dedekind real numbers. The different constructions of real numbers in a topos $\Shv (X)$ of sheaves on a topological space $X$ are reviewed in Stout \cite{stout}. 
The present work contains some results specific to $\Shv (\EsubS(\mathcal A))$ which depend upon its convexity.

Topos theory is a generalisation of set theory that can be used as a framework for mathematics. It allows us to exploit a form of complementarity between logic and structure.  We will use a spatial topos $\Shv(X)$ of sheaves on a topological space $X$. Brouwer's suggestion that we should understand a space through its open sets rather than by its elements was developed further through the concept of a sheaf. Given $X$, a sheaf assigns to each open set $U \in \mathcal O(X)$ some data, $F(U)$, in the form of sheets stacked over $X$ like Riemann surfaces. The data can be restricted to subsets $V \subset U$ or can be collated over $W = \cup U_j$ provided that the data agree on the overlaps $ U_j \cap U_k$.

The logic of topos theory is \emph{intuitionistic}. It retains the law of non-contradiction: that for any proposition $P$, $P \land \lnot P$ is false and excludes the law of the excluded middle: that  $P \lor \lnot P$ is true. In particular, it excludes the law of double negation: that $\lnot \lnot P \Rightarrow P$. Nevertheless we still have the contrapositive rule: that $ P \Rightarrow Q$ is equivalent to $\lnot Q \Rightarrow \lnot P$. The extent to which a proposition is true is given by an open subset of the topological space $X$. The sheaf $\Omega(W) = \{ U | U \subset W, U \in \mathcal{O}(X)\}$, defined for any open subset $W \subseteq X$, is a subobject classifier for $\Shv(X)$.

The $O^{\ast}$-algebras in this paper come from unitary representations $\hat U$ of Lie groups $G$ on a Hilbert space $\mathcal{H}$. The underlying $\ast$-algebra is the enveloping algebra $\mathcal{E}(\mathcal{G})$ of the Lie algebra $\mathcal{G}$ of $G$. In the theory of qr-numbers, the
 $O^{\ast}$-algebra $\mathcal{A}$ is the infinitesimal representation $dU$ of the enveloping algebra $\mathcal{E}(\mathcal{G})$ obtained from a unitary representation $U$ of $G$ on a separable Hilbert space $\mathcal{H}$. The properties of $dU(\mathcal{E}(\mathcal{G}))$, the infinitesimal representation of the enveloping algebra $\mathcal{E}(\mathcal{G})$ obtained from a unitary representation $U$ of the group, are used to determine the topological properties of its state space $\EsubS(dU(\mathcal{E}(\mathcal{G})))$. 

For any topological space $X$, the topos $Shv(X)$ has an object called the {\em Dedekind} reals 
$\RsubD(X)$ which is equivalent to the sheaf of germs of continuous real-valued functions on $X$, $\RsubD(X) \equiv \mathcal C(X)$, two functions having the same germ at $x \in X$ if they agree on some open neighbourhood of $x$. The local sections of $C(X)$ over an open subset $W$ of $X$ are real numbers defined to extent $W$. These real numbers are as mathematically acceptable as the classical real numbers. Which real numbers are used in a physical theory depends upon their fitness for the task of representing the numerical values of the physical qualities of the theory. In the qr-number interpretations \cite{adelman2,arxives,arxives2,durt2}, the topological space $X =  \EsubS(dU(\mathcal{E}(\mathcal{G})))$. 
\subsection{$O^{\ast}$-algebras}
The mathematics of $O^{\ast}$-algebras was developed \cite{powers} from concepts needed for the study of a single unbounded self-adjoint operator and $C^{\ast}$-algebras guided by the theory of representations of Lie algebras. Schm\"udgen \cite{xxx} is the basic reference.
Energy, momentum and position operators have explicit realisations in $O^{\ast}$-algebras.
   
If $\mathcal D$ is a dense subset of a
Hilbert space $\scrH$, let $\mathcal L(\mathcal D)$ (resp. 
$\mathcal L_{c}(\mathcal D)$) denote the 
set of all (resp. all closable) linear operators from $\mathcal D$ to $\mathcal D$. 
If $\hat A^{\ast}$ is the Hilbert space adjoint of a linear operator $\hat A$ whose domain, $\text{dom}(\hat A)$, is $\mathcal{D}$ put
\begin{equation} 
\mathcal{L}^{\dagger}(\mathcal D) = \{ \hat A \in \mathcal L(\mathcal D) ; \text{dom}(\hat A^{\ast})
\supset \mathcal D,  \hat A^{\ast} \mathcal D \subset \mathcal D\}
\end{equation}
Then $\mathcal L^{\dagger}(\mathcal D) \subset \mathcal L_{c}(\mathcal D) \subset \mathcal L(\mathcal D) $ where $ \mathcal L(\mathcal D) $ is an algebra with the usual operations for linear operators with a common invariant domain: addition $\hat A +\hat  B$, scalar multiplication $\alpha \hat A$ and
non-commutative multiplication $\hat A\hat B$. Furthermore $\mathcal L^{\dagger}(\mathcal D)$  is a $\ast$-algebra with the involution $\hat A \to \hat A^{\dagger} :\equiv \hat A^{\ast} |_{\mathcal D}$.
\begin{definition}
An O-algebra on $\mathcal D \subset \scrH$ is a subalgebra of $\mathcal L(\mathcal D)$ that is contained in $\mathcal L_{c}(\mathcal D)$ whilst an $O^{\ast}$-algebra on $\mathcal D \subset \scrH$ is a $\ast$-subalgebra of $\mathcal L^{\dagger}(\mathcal D)$. 
\end{definition} 
Let $\mathcal A$ be an O-algebra on $\mathcal D \subset \scrH$, then the natural graph topology of $\mathcal A$ on $\mathcal D$, denoted $t_{\mathcal A}$, is the locally convex topology defined by the family of seminorms $\{ \| \cdot \|_{\hat A} ; \hat A \in \mathcal A \}$ where $\| \phi \|_{\hat A} =  \| \hat A\phi \|, \phi \in \mathcal D$. It is the weakest locally convex topology on $\mathcal D$ in which each operator $\hat A \in \mathcal A$
is a continuous mapping of $\mathcal D$ into itself. Every $O^{\ast}$-algebra $\mathcal A$  is a directed family (Schm\"udgen \cite{xxx}, Definition 2.2.4) so that a mapping $\hat A$ of $\mathcal D$  into $\mathcal D$ is continuous iff for each semi-norm $\|\cdot\|_{\hat X}$ on  $\mathcal D$ there is a semi-norm $\|\cdot\|_{\hat Y}$ on  $\mathcal D$ and a positive real number $\kappa$  such that 
$\|\hat A \phi  \|_{\hat X}\leq \kappa \|\ \phi \|_{\hat Y}$
for all $\phi \in \mathcal D$. Therefore any $\hat A \in \mathcal A$ defines a continuous map from $\mathcal D$ to $\mathcal D$ by taking $\hat Y = \hat X \hat A$ and $\kappa = 1$ for all $\hat X \in \mathcal M$. This shows that any $\hat A \in \mathcal A$ is bounded as a linear map from $\mathcal D$ to $\mathcal D$.

We will use the pre-compact topology on $\mathcal D$ which is determined by a directed family of semi-norms $p_{\scrK,\scrN } (\hat A) := \sup_{\phi \in \scrK}  \sup_{\psi \in \scrN} | \langle \hat A \phi, \psi\rangle | $ where  $\scrK, \scrN$ range over pre-compact subsets of $\mathcal D$ in the topology, $t_{\mathcal A}$, see Schm\"udgen \cite{xxx}, Section 5.3. If $\mathcal D$  is a complete locally convex Hausdorff space with respect to the topology  $t_{\mathcal A}$, then any subset whose closure is compact  is pre-compact. $\mathcal D$ is a Fr\'echet space if it is metrizable, e.g., when the topology $t_{\mathcal A}$ is generated by a countable number of semi-norms.  

\subsubsection{The $O^{\ast}$-algebra $dU(\mathcal{E}(\mathcal{G}))$.}  
In this paper each $O^{\ast}$-algebra comes from a unitary representation $\hat U$ of a Lie group $G$ on a Hilbert space $\mathcal{H}$. Its $\ast$-algebra is the enveloping algebra $\mathcal{E}(\mathcal{G})$ of the Lie algebra $\mathcal{G}$ of $G$.
Given a unitary representation $U$ of $G$ on $\mathcal{H}$, a vector $\phi \in \mathcal{H}$ is a 
$C^{\infty}$-vector for $U$ if the map $g \to U(g)\phi$ of the $C^{\infty} $ manifold $G$ into $\mathcal{H}$ is a $C^{\infty}$-map. The set of $C^{\infty}$-vectors for $U$ is denoted $\mathcal{D}^{\infty}(U)$, it is a dense linear subspace of $\mathcal H$ which is invariant under $\hat U(g), \; g \in G$. 

The representation of the Lie algebra $\mathcal{G}$ of $G$ is obtained from the unitary representation $U$ of $G$ by defining $\forall x \in \mathcal{G}$ the operator $dU(x)$ with domain $\mathcal{D}^{\infty}(U)$ as
\begin{equation}
dU(x)\phi = \frac{d}{dt}U(\exp tx)\phi|_{t=0} = \lim_{t\to0} t^{-1}(U(\exp tx ) - I)\phi, \; \phi \in \mathcal{D}^{\infty}(U)
\end{equation}

$dU(x)$ belongs to a $\ast$-representation of $\mathcal{G}$ on $\mathcal{D}^{\infty}(U)$ which has a unique extension to a $\ast$-re\-pre\-sen\-ta\-tion of $\mathcal{E}(\mathcal{G})$ on $\mathcal{D}^{\infty}(U)$ called the \emph{infinitesimal representation} of the unitary representation $U$ of $G$. Each operator $\imath  dU(x)$ is essentially self-adjoint on $\mathcal{D}^{\infty}(U)$.
The graph topology $\tau_{dU}$  is generated by the  family of semi-norms  $\| \cdot \|_{dU(y)} $, for $y \in \mathcal{E}(\mathcal{G})$. Then $\mathcal{D}^{\infty}(U) $ is a Fr\'echet space when equipped with the graph topology $\tau_{dU}$. 

The graph topology $\tau_{dU}$ on $\mathcal{D}^{\infty}(U)$ can be generated by other families of semi-norms. 
\begin{lemma}[Schm\"udgen \cite{xxx}, Corollary 10.2.4]
Let a be an elliptic element of $ \mathcal{E}(\mathcal{G}) $, then $\mathcal{D}^{\infty}(U) = \mathcal{D}^{\infty}(\overline{dU(a)})$ and the graph topology $\tau_{dU}$ on $\mathcal{D}^{\infty}(U) $ is generated by the family of semi-norms $\| \cdot \|_{dU(a)^{n}}$, $n \in \NN_{0}$.
\end{lemma}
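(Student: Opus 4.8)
The plan is to pass from the Lie-algebra operators to left-invariant differential operators on $G$ and then to invoke elliptic regularity. Recall that $\mathcal{E}(\mathcal{G})$ is canonically isomorphic to the algebra of left-invariant differential operators on $G$; under this isomorphism the elliptic element $a$ corresponds to a (left-invariant, hence globally defined) elliptic differential operator $\tilde{a}$ on the manifold $G$, whose order equals the degree of $a$. Both assertions of the lemma — the equality of the two spaces of $C^{\infty}$-vectors and the claim that the seminorms $\|\cdot\|_{dU(a)^{n}}$ already generate $\tau_{dU}$ — will follow from a single a priori Sobolev estimate for $\tilde{a}$ transported to the representation.

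First I would dispose of the trivial inclusions. Since $\mathcal{D}^{\infty}(U)$ is invariant under every $dU(y)$, $y \in \mathcal{E}(\mathcal{G})$, and in particular under $dU(a)$, iteration gives $\mathcal{D}^{\infty}(U) \subseteq \mathrm{dom}(dU(a)^{n}) \subseteq \mathrm{dom}(\overline{dU(a)}^{n})$ for every $n$, whence $\mathcal{D}^{\infty}(U) \subseteq \mathcal{D}^{\infty}(\overline{dU(a)})$. Likewise each $a^{n} \in \mathcal{E}(\mathcal{G})$, so every seminorm $\|\cdot\|_{dU(a)^{n}} = \|\cdot\|_{dU(a^{n})}$ is one of the defining graph seminorms of $\tau_{dU}$ and is therefore $\tau_{dU}$-continuous. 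Hence one half of each claim is free, and everything reduces to the reverse direction.

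The heart of the argument is the elliptic estimate: for every $y \in \mathcal{E}(\mathcal{G})$ there exist $n \in \NN_{0}$ and $C > 0$ with
\begin{equation}
\| dU(y)\phi \| \;\leq\; C\bigl( \|dU(a)^{n}\phi\| + \|\phi\| \bigr), \qquad \phi \in \mathcal{D}^{\infty}(U),
\end{equation}
where $n$ may be taken so that $n\,\deg(a) \geq \deg(y)$. To obtain it I would work with the orbit map $F_{\phi}(g) = U(g)\phi$, whose $\mathcal{H}$-valued derivatives along left-invariant vector fields realise the operators $dU(\cdot)$; the elliptic Sobolev estimate $\|u\|_{H^{s}} \leq C_{s}(\|\tilde{a}^{n}u\| + \|u\|)$ with $s = n\,\deg(a) \geq \deg(y)$, taken on a relatively compact chart of $G$, pulled back through $F_{\phi}$ and integrated against Haar measure near the identity, yields the displayed inequality. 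Granting this estimate, the reverse topological inclusion is immediate, and the reverse domain inclusion follows by the standard smoothing argument: for $\phi \in \bigcap_{n} \mathrm{dom}(\overline{dU(a)}^{n})$ one convolves with an approximate identity $\chi_{\varepsilon} \in C_{c}^{\infty}(G)$ to produce G\aa rding vectors $U(\chi_{\varepsilon})\phi \in \mathcal{D}^{\infty}(U)$, applies the estimate uniformly in $\varepsilon$, and lets $\varepsilon \to 0$ to conclude that $dU(y)\phi$ exists for all $y$, i.e. $\phi \in \mathcal{D}^{\infty}(U)$.

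I expect the main obstacle to be the rigorous transport of the local elliptic a priori estimate on $G$ to the operator inequality in $\mathcal{H}$, while keeping the manipulation of closures and domains honest. Concretely one must justify that $\overline{dU(a)}^{n}\phi$ coincides with the distributional action of $\tilde{a}^{n}$ on the orbit map $F_{\phi}$, and that the smoothed vectors $U(\chi_{\varepsilon})\phi$ converge to $\phi$ in every graph norm $\|\cdot\|_{dU(a)^{k}}$ for $k \leq n$. This is exactly where ellipticity — and not merely essential self-adjointness of $\imath\,dU(a)$ — is indispensable: it is the ellipticity of $\tilde{a}$ that upgrades control of the single scale $\|\tilde{a}^{n}\cdot\|$ to full Sobolev control, and hence to simultaneous control of every $dU(y)$.
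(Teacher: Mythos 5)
The paper does not prove this lemma at all: it is imported verbatim as Corollary 10.2.4 of Schm\"udgen's book, so there is no internal argument to compare yours against. Your outline is, however, a faithful reconstruction of how that result is actually established in the literature (Nelson, Goodman, Poulsen, and Chapter~10 of Schm\"udgen): the two easy inclusions are disposed of exactly as you say, and the substantive content is the a priori estimate $\|dU(y)\phi\| \le C(\|dU(a)^{n}\phi\| + \|\phi\|)$, obtained by transporting a local elliptic estimate through the orbit map $g \mapsto U(g)\phi$ (along which left-invariant vector fields realise $dU(\cdot)$, as you note), followed by a G\aa rding-vector smoothing argument for the reverse domain inclusion. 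Two details deserve flagging. First, your exponent is slightly optimistic: evaluating derivatives of the orbit map at the identity forces a Sobolev embedding, so one needs $n\deg(a) > \deg(y) + \tfrac{1}{2}\dim G$ rather than $n\deg(a) \ge \deg(y)$; this is harmless for the lemma, which only requires \emph{some} $n$ for each $y$. Second, in the smoothing step $U(\chi_{\varepsilon})$ does not commute exactly with powers of $\overline{dU(a)}$ --- pushing the operator onto $\chi_{\varepsilon}$ produces right-invariant derivatives and lower-order commutator corrections --- and this bookkeeping is precisely where the published proof spends its effort; you correctly identify it as the main obstacle rather than glossing over it. In short, the sketch is sound and matches the standard proof, but within this paper the statement functions as a quoted external result and requires only the citation.
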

 
The two families of semi-norms $\{ \| \cdot \|_{dU(a)^{n}}, \; n \in \NN_{0}\}$ for an elliptic $a \in \mathcal{E}(\mathcal{G}) $ and $\{\| \cdot \|_{dU(y)}, \; y \in  \mathcal{E}(\mathcal{G}) \}$ are equivalent because they both generate the graph topology on $\mathcal{D}^{\infty}(U)$ and the first is clearly a directed family so for each $y \in  \mathcal{E}(\mathcal{G})$ there exists a constant $K$ and an integer $n\in \NN$ so that for all $\phi \in \mathcal{D}^{\infty}(U)$, $\|dU(y)\phi\| \leq K \|dU(a)^{n}\phi \|.$

\subsection{The state spaces $ \EsubS(\mathcal A)$\ } 

Let the algebra of physical qualities  be represented by the $O^{\ast}$-algebra 
$\mathcal M$ defined on the dense subset $\mathcal{D} \subset \scrH$. We assume that $\mathcal M$ has a unit element, the identity operator $\hat I$.

A linear functional $f$ on the $O^{\ast}$-algebra $\mathcal A$ is a linear map from $\mathcal A$ 
to the standard complex numbers $\mathbb C$, it is strongly positive iff 
$f(\hat X)\geq{0}$ for all $\hat X\geq{0}$ in $\mathcal A$. 

\begin{definition}
The states on $\mathcal A $ are the strongly positive 
linear functionals on $\mathcal A$ that are normalised to take the 
value 1 on the unit element $\hat I $ of $\mathcal A $. The state space $\EsubS (\mathcal A)$ of the $O^{\ast}$-algebra $\mathcal A$ is the set of all states on $\mathcal A$.
\end{definition}
\begin{definition}
A bounded operator $\hat B$ on $\scrH$ is trace class iff $Tr |\hat B| < \infty$.
A trace functional on $\mathcal M$ is a functional of the form $ \hat A \in \mathcal A \mapsto Tr (\hat B \hat A) $ for some trace class operator $\hat B$.
\end{definition}
The states of the $O^{\ast}$-algebra $\mathcal A$ are given by trace class operators.
\begin{theorem}\cite{xxx}\label{TH1}
Every strongly positive linear
functional  on $\mathcal A $ is given by a trace functional.
\end{theorem}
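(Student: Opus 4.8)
The plan is to produce an explicit positive trace class operator $\hat B$ on $\scrH$ with $f(\hat A) = Tr(\hat B \hat A)$ for every $\hat A \in \mathcal A$, in three stages: first use strong positivity to obtain continuity of $f$, then invoke the nuclearity of the smooth domain --- which is exactly what the elliptic element of the Lemma supplies --- to represent a continuous functional by a trace class kernel, and finally transfer positivity and normalisation from $f$ to $\hat B$.

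First I would record that strong positivity makes the sesquilinear form $\langle \hat X, \hat Y\rangle_f := f(\hat Y^{\dagger} \hat X)$ positive semi-definite on $\mathcal A$, since each $\hat X^{\dagger} \hat X \geq 0$. The Cauchy--Schwarz inequality for this form gives $|f(\hat A)|^2 = |f(\hat I^{\dagger} \hat A)|^2 \leq f(\hat I)\, f(\hat A^{\dagger} \hat A)$. Now fix an elliptic $a \in \mathcal E(\mathcal G)$ as in the Lemma. The remark following the Lemma yields, for $\hat A = dU(y)$ with $y \in \mathcal E(\mathcal G)$, an integer $n$ and a constant $K$ with $\|dU(y)\phi\| \leq K\|dU(a)^{n} \phi\|$; squaring gives the quadratic-form inequality $\hat A^{\dagger} \hat A \leq K^2 P$ on $\mathcal D$, where $P := (dU(a)^{n})^{\dagger} dU(a)^{n} \in \mathcal A$ is positive. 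Strong positivity converts this \emph{operator} inequality into the \emph{numerical} inequality $f(\hat A^{\dagger} \hat A) \leq K^2 f(P)$, so that $|f(\hat A)| \leq K\,f(\hat I)^{1/2} f(P)^{1/2}$. This bounds $f$ by the seminorms generating $\tau_{dU}$ and shows that $f$ is continuous for the graph topology.

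The decisive stage is to upgrade this continuity to a trace class representation, and here the ellipticity of $a$ is indispensable. Because $a$ is elliptic, the eigenvalues of $\overline{dU(a)}$ grow fast enough that a sufficiently high power of the resolvent $(I + \overline{dU(a)}^{\,2})^{-1}$ is trace class; equivalently, $(\mathcal D^{\infty}(U),\tau_{dU})$ is a nuclear Fr\'echet space. For nuclear domains the continuous dual attached to the precompact seminorms $p_{\scrK,\scrN}$ is precisely the space of trace class operators, so a $\tau_{dU}$-continuous functional is represented by a unique trace class $\hat B$ via $f(\hat A) = Tr(\hat B \hat A)$, the products $\hat B \hat A$ being trace class because $\hat B$ absorbs the polynomial growth controlled by $dU(a)$. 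Testing this representation on the positive forms built from the eigenvectors $e_k \in \mathcal D^{\infty}(U)$ of $\overline{dU(a)}$ forces $\langle \hat B e_k, e_k\rangle \geq 0$, whence $\hat B \geq 0$; the normalisation $f(\hat I)=1$ gives $Tr\,\hat B = 1$, so $\hat B$ is a density matrix and $f$ a trace functional.

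I expect the main obstacle to be precisely this nuclearity--duality step. The passage from ``$f$ is $\tau_{dU}$-continuous'' to ``$f$ is integration against a trace class operator'' is not formal: it rests on the trace-class resolvent powers of the elliptic element, which make the smooth domain nuclear and place its operator algebra in duality with the trace class, and it is exactly this input that rules out the pathological strongly positive functionals --- such as spectral point evaluations --- that would otherwise fail to be trace functionals. Care is also needed to confirm that it is strong positivity, and not merely positivity, that licenses converting the operator inequality $\hat A^{\dagger}\hat A \leq K^2 P$ into the scalar inequality underlying the continuity estimate.
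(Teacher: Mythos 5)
The paper offers no proof of this statement: it is quoted directly from Schm\"udgen's book, so there is no in-paper argument to measure yours against, and it must be judged on its own terms. Your first stage is sound and is indeed how the standard argument begins: positivity of the form $(\hat X,\hat Y)\mapsto f(\hat Y^{\dagger}\hat X)$ gives Cauchy--Schwarz, and you are right that \emph{strong} positivity (not mere positivity) is what licenses passing from the quadratic-form inequality $\hat A^{\dagger}\hat A\leq K^{2}P$ on $\mathcal D$ to the scalar inequality $f(\hat A^{\dagger}\hat A)\leq K^{2}f(P)$, since $K^{2}P-\hat A^{\dagger}\hat A$ is a positive element of $\mathcal A$ that need not be a sum of squares.

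The gap is in your ``decisive stage''. Ellipticity of $a$ does \emph{not} imply that some power of $(I+\overline{dU(a)}^{\,2})^{-1}$ is trace class, nor that $(\mathcal D^{\infty}(U),\tau_{dU})$ is nuclear: for the regular representation of $G=\mathbb{R}^{n}$ on $L^{2}(\mathbb{R}^{n})$ the Nelson Laplacian is the ordinary Laplacian, which has purely continuous spectrum, so no power of its resolvent is even compact, and $\mathcal D^{\infty}(U)=\bigcap_{m}H^{m}(\mathbb{R}^{n})$ is not nuclear; your later appeal to ``the eigenvectors $e_{k}$ of $\overline{dU(a)}$'' presupposes discrete spectrum in the same way. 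The trace-class representation in Schm\"udgen's theorem is obtained under an explicit additional hypothesis on the domain --- roughly, that the graph topology is generated by a sequence of operators whose inverses are Hilbert--Schmidt (true, e.g., for the Schr\"odinger representation of the Heisenberg group, where $\mathcal D^{\infty}(U)=\mathcal S(\mathbb{R}^{n})$) --- and the conclusion genuinely fails without some such hypothesis: for $\mathcal A=\mathcal L^{\dagger}(\mathcal H)=\mathcal B(\mathcal H)$ every positive functional is strongly positive, yet singular states are not trace functionals. So the step you yourself flag as the main obstacle is exactly where the argument stops being a proof: you must either impose and actually use the Hilbert--Schmidt/nuclearity hypothesis, or observe that the theorem as stated here suppresses it.
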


The state space $\EsubS(\mathcal A)$ is contained in the convex
hull of projections $\scrP$ onto one-dimensional subspaces
spanned by unit vectors $\phi  \in \mathcal D$. If 
$\hat \rho \in \EsubS (\mathcal A)$ then there is an orthonormal set of vectors 
$\{\phi_{n}\}_{n\in \NN'}$  in $\mathcal D$  such that
$\hat \rho = \sum_{n \in \NN'}\lambda_{n}
|\phi_{n}\rangle\langle\phi_{n}| = 
\sum_{n \in \NN'}  \lambda_{n} \scrP_n $ where
$\scrP_n = |\phi_{n}\rangle\langle\phi_{n}|$ is the orthogonal projection onto the
one-dimensional subspace spanned by $\phi_{n}$,
$\lambda_{n} \in \RR, 0 \leq \lambda_{n} \leq 1 $ and $\sum_{n \in \NN'}
 \lambda_{n} = 1$ with $\NN' = \{ n \in \NN; \lambda_{n} \neq 0 \}$.  All states
in $\EsubS(\mathcal A)$ satisfy the condition that as $n$ approaches
infinity the sequence $\{\lambda_{n}\}$ converges to zero faster than any
power of $1/n$ \cite{adelman2}. 

The following well-known example is our guide. For the $C^{\ast}$-algebra $ \scrB( \scrH)$, the state space $\scrE = \EsubS(\scrB( \scrH))$ is composed of operators in $\scrT_{1}(\scrH)$ that are self-adjoint and normalised. The collection $\scrT_{1}(\scrH)$ of all trace class operators on $\scrH$ is a linear space over $\scrC$, it is a Banach space when equipped with the trace norm  $\nu(T) = Tr |T|$, where $|T| = \surd (T^{\ast} T)$. The open subsets of $\scrE$ in the trace norm topology are denoted  
$\nu( \rho_{1} ; \delta) = \{  \rho \; :Tr | \rho -  \rho_{1}| < \delta\}$. $\scrE$ is compact in the weak$^\ast$ topology, the weakest topology on  $\scrE$ that makes continuous all the functionals $ \rho \to Tr ( \rho\hat B) , \hat B \in  \scrB( \scrH)$. Its sub-basic open sets are $ \scrN( \rho_{0}  ; \hat B ; \epsilon) = \{\rho \; : |Tr \rho \hat B - Tr \rho_{0} \hat B| < \epsilon \}$ with $\hat B \in \mathcal B(\mathcal H)$ and $\epsilon > 0$.

In order to carry out a similar analysis for the states on an $O^{\ast}$-algebra $\mathcal A$, we define a subset $\mathcal{T}_{1}(\mathcal A)$ of $\scrT_{1}(\scrH)$. Let $\mathcal{T}_{1}(\mathcal A) \subset \scrT_{1}(\scrH)$ be composed of trace class operators $\hat T$ that satisfy $ \hat T\scrH \subset \mathcal D , \hat T^{\ast}\scrH \subset \mathcal D $ and $\hat A \hat T, \hat A \hat T^{\ast} \in
\mathcal{T}_{1}(\scrH)$ for all $\hat A \in \mathcal A $, that is,
$\mathcal{T}_{1}(\mathcal A)$ is a $\ast$-subalgebra of $\mathcal A$ satisfying $\mathcal A
\mathcal{T}_{1}(\mathcal A) = \mathcal{T}_{1}(\mathcal A)$ \cite{inoue}. 
\begin{definition}
The state space $\EsubS(\mathcal A)$ for the $O^{\ast}$-algebra $\mathcal A$ is the set of normalized, self-adjoint operators in $\mathcal{T}_{1}(\mathcal A)$.
\end{definition}
With parameters $\hat A \in \mathcal A, \epsilon > 0$ and $\rho_0 \in \EsubS(\mathcal A)$, the sets  $ \scrN( \rho_{0}  ; \hat A ; \epsilon) = \{
\rho \ ; |Tr \rho \hat A - Tr \rho_{0} \hat A| < \epsilon \}$  form an open 
sub-base for the weak topology on $\EsubS(\mathcal A)$ generated by the functions $ a_{Q}(\cdot)$. 
The weak topology on $\EsubS(\mathcal A)$ is stronger than the weak$^\ast$ topology on $\scrE$ restricted to $\EsubS(\mathcal A)$ because for any $\hat B \in \scrB( \scrH)$ the weak$^\ast$ sub-basic open set  $ \scrN( \rho_{0}  ; \hat B ; \epsilon)$ is also open in the weak topology \cite{reed}. 

It is well known that for all $\rho_1 \in \EsubS(\mathcal A)$ and every $\delta
> 0$, if $\hat A \in \scrB(\scrH)$ then $\nu( \rho_{1} ; \delta) \subset  \scrN(
\rho_{1}  ; \hat A ;  K \delta) $ where $K = \| A \| $. The following
generalisation uses the precompact topology on $ \mathcal A$ determined by the
family of semi-norms $ \mathcal P_{ M, N} (\hat A) = \sup_{\xi \in M, \eta \in
N}
| \langle \hat A \xi, \eta \rangle | $ where $\{ M, N \}$ range over subsets of $\mathcal D$ precompact in the graph topology $t_{\mathcal A}$, see Schm\"udgen \cite{xxx}, Sections 5.3. A subset of $\mathcal D$ is precompact in the graph topology  $t_{\mathcal A}$ if its closure is compact in $t_{\mathcal A}$. 

\begin {lemma} If $\mathcal D$ is a Fr\'echet space in the topology $t_{\mathcal A}$, then every essentially self-adjoint operator $\hat A \in \mathcal A $ defines a linear functional $ G_{\hat A}(\hat T) = Tr \hat A \hat T$ on $\mathcal{T}_{1}(\mathcal A)$ such that for every self-adjoint  
$\hat T \in \mathcal{T}_{1}(\mathcal A)$, 
\begin{equation}
|G_{\hat A}(\hat T)| = |Tr \hat A \hat T| \leq   p_{\hat T}(\hat A) \nu (\hat T).
\end{equation} 
$\mathcal P_{\hat T} (\hat A) = \sup_{\zeta_{n}} |\langle \hat A \zeta_{n},\zeta_{n} \rangle|  \leq \sup_{\zeta_{n}} \| \hat A \zeta_{n} \|$, where the set $\{\zeta_{n} \in \mathcal{D}\}$ is an orthonormal set of eigenvectors of $|\hat T|$ for eigenvalues $\lambda_{n} > 0 $ that includes the  orthonormal bases $\{\zeta_{m}\}_{m=1}^{s}$ of eigenspaces for $\lambda_{n} $ with multiplicity $s> 1$.
\end{lemma}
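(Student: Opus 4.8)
The plan is to reduce the estimate to the spectral resolution of the self-adjoint trace class operator $\hat T$ and to apply the triangle inequality and Cauchy--Schwarz termwise. First I would dispose of well-definedness: by the defining properties of $\mathcal{T}_{1}(\mathcal A)$ one has $\hat T\scrH\subset\mathcal D$ and $\hat A\hat T\in\scrT_{1}(\scrH)$ for every $\hat A\in\mathcal A$, so $\hat A\hat T$ is an everywhere-defined trace class operator, $Tr\,\hat A\hat T$ is a finite number, and $\hat T\mapsto Tr\,\hat A\hat T$ is linear; essential self-adjointness of $\hat A$ makes the value real on self-adjoint $\hat T$, matching $a_{Q}$. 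The Fr\'echet hypothesis on $(\mathcal D,t_{\mathcal A})$ is what licenses the spectral facts used below and places us inside Schm\"udgen's precompact-seminorm framework.

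Next I would invoke the spectral theorem: being self-adjoint and trace class, $\hat T$ admits a decomposition $\hat T=\sum_{n}\mu_{n}|\zeta_{n}\rangle\langle\zeta_{n}|$ with real eigenvalues $\mu_{n}$ and an orthonormal system $\{\zeta_{n}\}$, where for $\mu_{n}\neq0$ the relation $\zeta_{n}=\mu_{n}^{-1}\hat T\zeta_{n}\in\hat T\scrH\subset\mathcal D$ shows $\zeta_{n}\in\mathcal D$, so that $\hat A\zeta_{n}$ is meaningful; these $\zeta_{n}$ are precisely orthonormal eigenvectors of $|\hat T|$ for the eigenvalues $\lambda_{n}=|\mu_{n}|>0$, with an orthonormal basis chosen in each degenerate eigenspace (the multiplicity clause in the statement). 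Completing $\{\zeta_{n}\}$ by an orthonormal basis of $\ker\hat T$ and evaluating the basis-independent trace of $\hat A\hat T$ in this basis gives
\[
Tr\,\hat A\hat T \;=\; \sum_{n}\langle \hat A\hat T\zeta_{n},\zeta_{n}\rangle \;=\; \sum_{n}\mu_{n}\langle \hat A\zeta_{n},\zeta_{n}\rangle,
\]
the kernel terms dropping out; the series converges absolutely because the diagonal entries of a trace class operator are absolutely summable.

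The two inequalities are then immediate. Writing $p_{\hat T}(\hat A)=\mathcal P_{\hat T}(\hat A)=\sup_{m}|\langle \hat A\zeta_{m},\zeta_{m}\rangle|$ and using $\sum_{n}|\mu_{n}|=Tr|\hat T|=\nu(\hat T)$,
\[
|Tr\,\hat A\hat T| \;\le\; \sum_{n}|\mu_{n}|\,|\langle \hat A\zeta_{n},\zeta_{n}\rangle| \;\le\; p_{\hat T}(\hat A)\sum_{n}|\mu_{n}| \;=\; p_{\hat T}(\hat A)\,\nu(\hat T),
\]
and, since $\|\zeta_{n}\|=1$, Cauchy--Schwarz yields $|\langle \hat A\zeta_{n},\zeta_{n}\rangle|\le\|\hat A\zeta_{n}\|\,\|\zeta_{n}\|=\|\hat A\zeta_{n}\|$, whence $\mathcal P_{\hat T}(\hat A)\le\sup_{n}\|\hat A\zeta_{n}\|$, the final assertion.

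The hard part is not the arithmetic but the legitimacy of working in the eigenbasis of $\hat T$: I must secure that each $\zeta_{n}$ lies in $\mathcal D$, so that $\hat A\zeta_{n}$ is defined, and that the eigenbasis sum genuinely computes $Tr\,\hat A\hat T$. Both rest on the $\mathcal{T}_{1}(\mathcal A)$ axioms ($\hat T\scrH\subset\mathcal D$ and $\hat A\hat T$ trace class) together with the Fr\'echet/completeness hypothesis, which keeps the spectral resolution inside $\mathcal D$ and makes $p_{\hat T}$ the correct precompact seminorm. I would also flag that $p_{\hat T}(\hat A)$ may be $+\infty$ for unbounded $\hat A$ (for instance a number operator tested against a rapidly decaying state), so that the inequality, while always valid, carries content only when $p_{\hat T}(\hat A)<\infty$; for bounded $\hat A$ it reduces to the estimate with $K=\|\hat A\|$ recorded just before the lemma.
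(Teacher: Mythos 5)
Your argument is correct and its computational core is the same as the paper's: expand the self-adjoint $\hat T$ in an orthonormal eigenbasis, evaluate $Tr\,\hat A\hat T$ termwise, pull out $\nu(\hat T)=\sum_n|\lambda_n|$ by the triangle inequality, and use Cauchy--Schwarz with $\|\zeta_n\|=1$ for the last bound. The difference lies in the supporting machinery, and it is worth noting. You secure $\zeta_n\in\mathcal D$ directly from the $\mathcal{T}_{1}(\mathcal A)$ axiom $\hat T\scrH\subset\mathcal D$, which is cleaner and sufficient for the stated inequality. The paper instead invokes Grothendieck's canonical representation of $\hat T$ on the metrizable space $(\mathcal D,t_{\mathcal A})$ --- this is where the Fr\'echet hypothesis actually does work --- together with Schm\"udgen's Lemma 5.1.10, and concludes that the eigenvector set $\{\zeta_n\}$ is \emph{precompact in the graph topology} $t_{\mathcal A}$. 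That extra step buys two things your version leaves open: it identifies $p_{\hat T}$ as literally one of the seminorms $p_{\scrK,\scrN}$ of the precompact topology introduced earlier in the paper (which is the point of the notation), and, since a precompact set is bounded in $t_{\mathcal A}$ so that every seminorm $\|\cdot\|_{\hat A}$ is bounded on $\{\zeta_n\}$, it shows $p_{\hat T}(\hat A)<\infty$ even for unbounded $\hat A\in\mathcal A$ --- precisely the failure mode you flag in your closing caveat. So your proof is valid as far as the inequality goes, but to match the full content of the lemma you would need to add the observation that the nonzero-eigenvalue eigenvectors converge to $0$ in $t_{\mathcal A}$ (hence form a precompact set), which is what the Grothendieck representation supplies.
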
  
Note that $p_{\hat T}(\hat A)$ depends on $\hat T$, if it was independent of $\hat T$ then the inequality would show that  $G_{\hat A}$ was continuous with respect to the trace norm topology on $ \mathcal{T}_{1}(\mathcal A)$. 
  
\begin{proof}
Since $\mathcal D$ in the topology $t_{\mathcal A}$ is metrizable, a result of Grothendieck \cite{kothe} shows that the operator $\hat T\in \mathcal{T}_{1}(\mathcal A)$ has a canonical representation $\hat T = \sum _{n} \lambda_{n}  \langle \;, \zeta_{n} \rangle \eta_{n}$, where $\sum _{n}| \lambda_{n} | < \infty$ and the sequences $( \zeta_{n} )$ and $( \eta_{n} )$ converge to zero in $\mathcal D$. 
$\sum _{n} \lambda_{n}  \langle \;, \zeta_{n} \rangle \eta_{n}$ converges absolutely with respect to $\mathcal A $, that is, $\sum _{n} \lambda_{n}  \| \hat A   \eta_{n}\| \|\hat C \zeta_{n}\| < \infty$ for all $\hat A, \hat C \in \mathcal{A}$. 

If $\hat T$ is self-adjoint it has a canonical representation $\sum _{n} \lambda_{n}\langle \;,  \zeta_{n} \rangle \zeta_{n}$, the $\{\zeta_{n}\}$ being an orthonormal set of vectors in $\mathcal{D}$ where $\zeta_{n}$ is an eigenvector of $|\hat T|$ for the eigenvalue $\lambda_{n} > 0 $. By the spectral theorem, the series $\Sigma _{n} \lambda_{n}  \langle \;,\zeta_{n} \rangle \zeta_{n}$ is strongly convergent in $\scrH$ and $\Sigma_{n} | \lambda_{n}| \|\hat A \zeta_{n}\|^{2} < \infty$ for any 
$\hat A \in \mathcal{A} $ \cite[Lemma 5.1.10]{xxx}. Therefore the set $\{ \zeta_{n} \}$ has at most the single limit point $0 \in \mathcal D$ in the graph topology.
  
For a fixed $\hat T \in \mathcal{T}_{1}(\mathcal A)$ and any $\hat A \in \mathcal{A} $,
\begin{equation}
 |Tr \hat A \hat T | = | \Sigma _{n} \langle (\hat A \lambda_{n} \zeta_{n}), \zeta_{n} \rangle | \leq (\Sigma_{n} |\lambda_{n}|) \mathcal P_{ \hat T} (\hat A).
\end{equation}
The set $V(\hat T) =  \{ \zeta_{n} \}$ of eigenvectors of $|\hat T|$ form a
pre-compact set in $\mathcal D (t_{\mathcal A})$ so that 
\[\mathcal P_{\hat T} (\hat A) = \sup_{\zeta_{n}} |\langle \hat A
\zeta_{n},\zeta_{n} \rangle| \leq \sup_{\zeta_{n}} \| \hat A \zeta_{n} \|\]
because $\| \zeta_{n}\| = 1$ for all $n$. $\nu (\hat T) = (\Sigma_{n}
|\lambda_{n}|) $ is the trace norm of $\hat T$. We cannot immediately infer that
$G_{\hat A}$ is continuous with respect the trace norm restricted to
$\mathcal{T}_{1}(\mathcal A)$ because $ \mathcal P_{\hat T} (\hat A)$ depends on
$\hat T$. If the supremum was taken over all of $\mathcal{D}$ the argument could
be extended to a proof of the desired continuity of  $G_{\hat A}$.
\end{proof}
There are two special classes of self-adjoint operators $\hat A$ for which we can prove continuity of $G_{\hat A}$ with respect the trace norm restricted to $\mathcal{T}_{1}(\mathcal A)$. In the first, the operator $\hat A$ is bounded on $\mathcal{H}$ then $ \mathcal P_{ \hat T} (\hat A) \leq \sup_{\zeta_{n}} \| \hat A \zeta_{n} \| \leq  \|\hat A \|$. The second class contains operators $\hat A \in \mathcal{A}$ which are `relatively bounded' with respect to a positive self-adjoint operator $\hat N \in \mathcal{A} $ with $\hat N + \hat I$ an invertible operator that maps $\mathcal{D}$ into itself. Relative bounded means that  $\hat A (\hat N + \hat I)^{-1}$ is a bounded operator. Then $ \mathcal P_{ \hat T} (\hat A) \leq
\sup_{\zeta_{n}} \| \hat A \zeta_{n} \| \leq \sup_{\phi \in \mathcal{D}} \| \hat A \phi \| \leq \sup_{\psi \in \mathcal{D}} \| \hat A (\hat N + \hat I)^{-1}\psi \| \leq  \|\hat A (\hat N + \hat I)^{-1}\|$. 

\subsection{When $\mathcal{A}$ is  $dU(\mathcal{E}(\mathcal{G}))$}
If the $O^{\ast}$-algebra $\mathcal{A}$ is the infinitesimal representation $dU$ of the enveloping algebra $\mathcal{E}(\mathcal{G})$ obtained from a unitary representation $U$ of $G$. Take the positive self-adjoint operator $\hat N$ to be $  - \Delta$ where $\Delta = \sum_{i=1}^{d} x_{i}^{2}$ is the Nelson Laplacian in the enveloping algebra of the Lie algebra $\mathcal{G}$ with basis $\{ x_{1},x_{2},......,x_{d}\}$.

\begin{theorem} \label{TH2}
Each essentially self-adjoint operator $\hat A = dU(x)$ that represents an element $x \in \mathcal{E}(\mathcal{G}) $ defines a linear functional $G_{\hat A}$ on 
$\mathcal{T}_{1}(dU(\mathcal{E}(\mathcal{G}))$ that is continuous with respect to the trace norm topology on $ \mathcal{T}_{1}(dU(\mathcal{E}(\mathcal{G}))$. 
\end{theorem}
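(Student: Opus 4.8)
The plan is to show that every $\hat A = dU(x)$ falls into the second, ``relatively bounded'', class identified after the preceding Lemma, with reference operator $\hat N = -\Delta$, where $\Delta = \sum_{i=1}^{d} x_{i}^{2}$ is the Nelson Laplacian. First I would record that $\hat N \geq 0$ is self-adjoint and lies in $\mathcal A = dU(\mathcal{E}(\mathcal{G}))$, so that $\hat N + \hat I = \hat I - \Delta$ is bounded below by $\hat I$, hence boundedly invertible, and maps $\mathcal D = \mathcal{D}^{\infty}(U)$ into itself. The strategy is then to establish relative boundedness of $\hat A$ with respect to a power of $\hat N + \hat I$ and to feed the resulting uniform bound on $\mathcal P_{\hat T}(\hat A)$ into $|G_{\hat A}(\hat T)| \leq \mathcal P_{\hat T}(\hat A)\,\nu(\hat T)$, so as to obtain continuity in the trace norm itself.

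The relative boundedness I would obtain from elliptic regularity. Since $\Delta$ is an elliptic element of $\mathcal{E}(\mathcal{G})$, the elliptic-regularity Lemma (Schm\"udgen \cite{xxx}, Corollary 10.2.4) shows that the graph topology $\tau_{dU}$ is generated by the directed family $\|\cdot\|_{(\hat I - \Delta)^{n}}$, $n \in \NN_{0}$. Consequently, for the given $x$ there are a constant $K$ and an integer $n$ with $\|dU(x)\phi\| \leq K\,\|(\hat I - \Delta)^{n}\phi\|$ for all $\phi \in \mathcal D$, exactly as in the displayed inequality following that Lemma. Equivalently, $\hat A(\hat N + \hat I)^{-n} = dU(x)(\hat I - \Delta)^{-n}$ extends to a bounded operator on $\mathcal H$ of norm at most $K$; thus $\hat A$ is relatively bounded with respect to $(\hat N + \hat I)^{n}$, which is the natural extension of the second class to elements $x$ of arbitrary degree in $\mathcal{E}(\mathcal{G})$.

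The decisive step I would then need is the uniform estimate $\mathcal P_{\hat T}(\hat A) \leq \|\hat A(\hat N + \hat I)^{-n}\|$ for every self-adjoint $\hat T \in \mathcal{T}_{1}(\mathcal A)$, with a right-hand side independent of $\hat T$. Granting it, $|G_{\hat A}(\hat T)| = |Tr\,\hat A\hat T| \leq \|\hat A(\hat N + \hat I)^{-n}\|\,\nu(\hat T)$, and since the constant no longer depends on $\hat T$ the functional $G_{\hat A}$ is bounded, hence continuous, for the trace norm on $\mathcal{T}_{1}(\mathcal A)$. In carrying this out I would use that $(\hat I - \Delta)^{n} \in \mathcal A$ together with $\mathcal A\,\mathcal{T}_{1}(\mathcal A) = \mathcal{T}_{1}(\mathcal A)$ to keep the factorisation inside $\mathcal{T}_{1}(\mathcal H)$.

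The hard part will be exactly this last estimate. A priori $\mathcal P_{\hat T}(\hat A) = \sup_{n}|\langle \hat A\zeta_{n},\zeta_{n}\rangle|$ is a supremum over the orthonormal eigenvectors $\zeta_{n} \in \mathcal D$ of $|\hat T|$, and these vary with $\hat T$; the relative-boundedness factorisation is precisely the device meant to trade the unbounded $\hat A$ for the bounded $\hat A(\hat N + \hat I)^{-n}$ and thereby strip away the $\hat T$-dependence. Securing that this replacement yields a bound uniform over all $\hat T$, rather than one that still degrades with the spread of the eigenvectors of $|\hat T|$, is the crux on which the passage to the genuine trace norm rests, as opposed to the weaker conclusion that only controls $G_{\hat A}$ by the finer seminorm $\hat T \mapsto \nu((\hat I - \Delta)^{n}\hat T)$.
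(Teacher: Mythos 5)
Your proposal follows the same route as the paper's own proof: invoke the elliptic-regularity lemma for the Nelson Laplacian to get $\|dU(y)\phi\| \leq K\,\|dU((1-\Delta)^{m})\phi\|$ on $\mathcal{D}^{\infty}(U)$, interpret this as relative boundedness of $\hat A = dU(y)$ with respect to a power of $\hat I - \Delta$, and feed a $\hat T$-independent bound on $\mathcal{P}_{\hat T}(\hat A)$ into the lemma's inequality $|Tr\,\hat A\hat T| \leq \mathcal{P}_{\hat T}(\hat A)\,\nu(\hat T)$. The paper completes the passage by using that $dU((1-\Delta)^{m})$ maps $\mathcal{D}^{\infty}(U)$ onto itself, substituting $\phi = dU((1-\Delta)^{m})^{-1}\psi$ and normalizing $\psi$, so that $|Tr\,\hat A\hat T| \leq \nu(\hat T)\sup_{\psi}\|\hat A\,dU((1-\Delta)^{m})^{-1}\psi\|/\|\psi\| \leq K\,\nu(\hat T)$. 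You stop exactly short of this: the estimate you call ``the decisive step,'' namely $\mathcal{P}_{\hat T}(\hat A) \leq \|\hat A(\hat N + \hat I)^{-n}\|$ uniformly in $\hat T$, is announced, flagged as the crux, and never proved. As submitted, therefore, your argument is a sketch whose key step is missing, and it does not establish the theorem.

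Moreover, the missing estimate is not merely unproved; in the form you state it, it cannot hold for genuinely unbounded $\hat A$. Take $\hat T = |\phi\rangle\langle\phi|$ for a unit vector $\phi \in \mathcal{D}^{\infty}(U)$: this lies in $\mathcal{T}_{1}(\mathcal{A})$, has $\nu(\hat T) = 1$, and gives $\mathcal{P}_{\hat T}(\hat A) = |\langle \hat A\phi, \phi\rangle|$, which is unbounded as $\phi$ ranges over unit vectors of $\mathcal{D}^{\infty}(U)$ whenever the essentially self-adjoint $\hat A$ is unbounded (for self-adjoint operators the numerical radius on a core equals the norm), while $\|\hat A(\hat N+\hat I)^{-n}\|$ is a fixed finite number. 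The factorisation only yields $\|\hat A\zeta_{n}\| \leq K\,\|(\hat N+\hat I)^{n}\zeta_{n}\|$, and $\|(\hat N+\hat I)^{n}\zeta_{n}\|$ is not controlled by $\|\zeta_{n}\| = 1$; performed inside the trace it gives $|G_{\hat A}(\hat T)| \leq K\,\nu\bigl(dU((1-\Delta)^{m})\hat T\bigr)$, which is precisely the finer seminorm you mention in your last sentence, not $K\,\nu(\hat T)$. Your hesitation is thus well placed, and it points at the same delicate spot in the paper's own proof: replacing the constraint $\|\phi\| = 1$ by $\|\psi\| = 1$ under $\phi = dU((1-\Delta)^{m})^{-1}\psi$ converts $\sup_{\|\phi\|=1}\|\hat A\phi\|$ (infinite for unbounded $\hat A$) into the operator norm $\|\hat A\,dU((1-\Delta)^{m})^{-1}\|$, and these are not the same supremum. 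Since trace-norm continuity of $G_{\hat A}$ is equivalent to a bound $|Tr\,\hat A\hat T| \leq K\,\nu(\hat T)$ on all of $\mathcal{T}_{1}(\mathcal{A})$, the rank-one examples show that relative boundedness alone cannot close the gap; any completion must bring in additional structure — for instance, uniform control of the eigenvector families $V(\hat T)$, pre-compact in the graph topology, over trace-norm bounded sets — which neither your sketch nor the substitution step supplies.
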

\begin{proof} 
Write $\mathcal A$ for $dU(\mathcal{E}(\mathcal{G}))$.
From the preceding lemma we have that for a fixed $\hat T \in \mathcal{T}_{1}(\mathcal A)$ and any $\hat A \in \mathcal{A}$,
\begin{equation}
 |Tr \hat A \hat T | \leq (\Sigma_{n} |\lambda_{n}|) \mathcal P_{ \hat T} (\hat A) \leq \nu (\hat T)\sup_{\phi \in \mathcal{D}^{\infty}(U)} \| \hat A \phi \|.
\end{equation}
The graph topology on $\mathcal{D}^{\infty}(U)$ is generated by both families of semi-norms $\{ \|\cdot\|_{dU(x)} = \|dU(x)\cdot\|\; ; x \in \mathcal{E}(\mathcal{G}) \}$ and $\{ \| \cdot \|_{dU(1 - \Delta)^{n}}, \; n \in \NN_{0} \}$ hence for each $y \in \mathcal{E}(\mathcal{G})$ there exists $m \in \NN$ and $K\in \RR$ with $\|dU(y)\phi\| \leq K \|dU((1- \Delta)^{m}) \phi \|$
for all $\phi \in \mathcal{D}^{\infty}(U) =  \mathcal{D}^{\infty}(\overline{dU(1 - \Delta)})$, and $\mathcal{D}^{\infty}(\overline{dU(1 - \Delta)}) = \bigcap_{n\in\NN} \mathcal{D}(\overline{dU((1 - \Delta)}^{n}))$.

For any integer $m \in \NN$ the operator $dU((1- \Delta)^{m})= (dU(1- \Delta))^{m}$ is a positive, essentially self-adjoint operator on $\mathcal{D}^{\infty}(U) $ that maps  $\mathcal{D}^{\infty}(U) $ onto itself, so that for each $\phi \in \mathcal{D}^{\infty}(U)$ there exists $\psi \in \mathcal{D}^{\infty}(U)$ such that $\phi = dU((1- \Delta)^{m})^{-1} \psi$. Therefore for each $y \in \mathcal{E}(\mathcal{G})$,
$\|dU(y)\phi\| = \|dU(y)dU((1- \Delta)^{m})^{-1}  \psi\| \leq K \|\psi\|.$
Put $\hat A = dU(y)$ and normalize $\psi$, so that
$ |Tr \hat A \hat T | \leq \nu (\hat T)\sup_{\psi \in \mathcal{D}^{\infty}(U)} \| \hat A dU((1- \Delta)^{m})^{-1} \psi \|/ \|\psi\| \leq \nu (\hat T)  K .$
 
Let $ \rho_1, \rho \in \EsubS(\mathcal{E}(\mathcal{G}))$ and put $ \hat T = ( \rho -   \rho_1)$ and $K = K(\hat A)$, which is independent of $\hat T$, then 
\begin{equation}
|Tr (\hat A \rho - \hat A \rho_1)| \leq  K(\hat A) Tr | \rho -  \rho_1|.
\end{equation} 
\end{proof}
Note that if $\hat A$ represents an element $y \in \mathcal{G}$ then we can take $m =2$ and $m = 4$  if $y$ is a quadratic element of  $\mathcal{E}(\mathcal{G})$. 

\begin{corollary}\label{CR7}
Each function $a_{Q}(\cdot)$ for an essentially self-adjoint operator $\hat A \in dU(\mathcal{E}(\mathcal{G}))$  is continuous from $\EsubS(dU(\mathcal{E}(\mathcal{G})))$ to $\RR$ in the trace norm topology restricted to  $\EsubS(dU(\mathcal{E}(\mathcal{G})))$. 
\end{corollary}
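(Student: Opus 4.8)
The plan is to read the continuity of $a_Q$ directly off the Lipschitz estimate already secured in Theorem \ref{TH2}. Recall that $a_Q(\rho) = Tr(\rho\hat A)$. First I would check that this is genuinely real-valued on states, so that $a_Q$ maps into $\RR$ and not merely $\CC$: writing a self-adjoint $\rho \in \EsubS(dU(\mathcal{E}(\mathcal{G})))$ in its canonical form $\rho = \sum_{n} \lambda_{n} |\zeta_{n}\rangle\langle\zeta_{n}|$ with $\lambda_{n} \in \RR$ and $\zeta_{n} \in \mathcal{D}^{\infty}(U)$, one has $a_Q(\rho) = \sum_{n} \lambda_{n}\langle \hat A\zeta_{n},\zeta_{n}\rangle$, and each $\langle \hat A\zeta_{n},\zeta_{n}\rangle$ is real because $\hat A$ is essentially self-adjoint on $\mathcal{D}^{\infty}(U)$. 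Hence $a_Q$ does take values in $\RR$.

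Next I would fix two states $\rho,\rho_{1} \in \EsubS(dU(\mathcal{E}(\mathcal{G})))$. Since $\EsubS$ consists of normalized self-adjoint elements of $\mathcal{T}_{1}(\mathcal A)$ and $\mathcal{T}_{1}(\mathcal A)$ is a linear $\ast$-subalgebra, the difference $\hat T = \rho - \rho_{1}$ is again a self-adjoint element of $\mathcal{T}_{1}(\mathcal A)$, so Theorem \ref{TH2} applies to it verbatim. With $\hat A = dU(y)$ the theorem then yields
\[
|a_Q(\rho) - a_Q(\rho_{1})| = |Tr(\hat A\rho) - Tr(\hat A\rho_{1})| \leq K(\hat A)\, Tr|\rho - \rho_{1}| = K(\hat A)\,\nu(\hat T),
\]
where the constant $K(\hat A)$ depends only on $\hat A$ and not on the chosen states. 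In other words, $a_Q$ is Lipschitz on $\EsubS(dU(\mathcal{E}(\mathcal{G})))$ with constant $K(\hat A)$ in the trace-norm metric.

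Finally I would convert this Lipschitz bound into the $\epsilon$--$\delta$ formulation of continuity. If $K(\hat A) = 0$ then $a_Q$ is constant and continuity is immediate; otherwise, given $\epsilon > 0$, I set $\delta = \epsilon / K(\hat A)$, so that $\nu(\rho - \rho_{1}) < \delta$ forces $|a_Q(\rho) - a_Q(\rho_{1})| < \epsilon$. Because the trace-norm balls $\nu(\rho_{1};\delta)$ form a base for the trace-norm topology restricted to $\EsubS$, this establishes continuity at every $\rho_{1}$, hence continuity of $a_Q$.

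I do not expect a genuine obstacle here: the entire analytic difficulty — bounding the unbounded operator $\hat A$ uniformly across all states by routing through the resolvent $dU((1-\Delta)^{m})^{-1}$ of the Nelson Laplacian — has already been absorbed into Theorem \ref{TH2}. The corollary is essentially the restatement of that uniform bound as continuity of the expectation functional, and the only points demanding any care are the reality of $a_Q$ (so the codomain is $\RR$) and the degenerate case $K(\hat A) = 0$.
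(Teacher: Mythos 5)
Your proposal is correct and matches the paper's intent exactly: the corollary is stated without proof precisely because it is the immediate $\epsilon$--$\delta$ reformulation of the Lipschitz bound $|Tr(\hat A\rho)-Tr(\hat A\rho_{1})|\leq K(\hat A)\,Tr|\rho-\rho_{1}|$ established at the end of the proof of Theorem~\ref{TH2}. Your two supplementary checks (reality of $a_{Q}$ on self-adjoint states and the degenerate case $K(\hat A)=0$) are sensible housekeeping that the paper leaves implicit.
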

\begin {corollary}\label{CR8}
The open sets  $\{\nu ( \rho_{1} ; \delta);  \rho_{1} \in \EsubS(dU(\mathcal{E}(\mathcal{G}))), \delta > 0 \}$ form an open basis for the weak topology on $\EsubS(dU(\mathcal{E}(\mathcal{G})))$.
\end{corollary}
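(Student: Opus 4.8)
The plan is to show that the weak topology $\tau_w$ generated by the functions $a_Q$, $\hat A \in dU(\mathcal{E}(\mathcal{G}))$, coincides on $\EsubS(dU(\mathcal{E}(\mathcal{G})))$ with the trace-norm topology $\tau_n$ whose balls are the sets $\nu(\rho_1;\delta)=\{\rho : Tr|\rho-\rho_1|<\delta\}$. Since the balls $\nu(\rho_1;\delta)$ are by definition a basis for $\tau_n$, they form a basis for $\tau_w$ precisely when $\tau_w=\tau_n$, so the corollary is equivalent to this coincidence. One inclusion is immediate from Corollary \ref{CR7}: each $a_Q$ is $\tau_n$-continuous, so every sub-basic set $\scrN(\rho_0;\hat A;\epsilon)$, and hence every $\tau_w$-open set, is $\tau_n$-open. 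Thus $\tau_w\subseteq\tau_n$, and every weak-open set is a union of trace-norm balls, which is one half of the basis property.

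The substance is the reverse inclusion $\tau_n\subseteq\tau_w$, i.e. that each ball $\nu(\rho_1;\delta)$ is weak-open. I would obtain it from metrizability together with a sequential comparison. Both topologies are metrizable on the state space: $\tau_n$ is induced by the trace norm, while $\tau_w$ is the initial topology of the countable separating family of real-valued functions $a_Q$ obtained by letting $\hat A$ range over the countably many monomials in a basis of $\mathcal{G}$ (linearity of $a_Q$ in $\hat A$ shows this countable family generates the same topology as the full algebra). The family separates points because, as observed above, $\tau_w$ refines the weak$^\ast$ topology and bounded operators separate trace-class operators. For metrizable topologies with $\tau_w\subseteq\tau_n$, equality follows as soon as every $\tau_w$-convergent sequence of states is $\tau_n$-convergent: a $\tau_n$-closed set is then sequentially $\tau_w$-closed, hence $\tau_w$-closed.

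To establish that sequential statement, let $\rho_k\to\rho$ in $\tau_w$ with all terms normalized states. Since $\tau_w$ refines the weak$^\ast$ topology, $Tr\rho_k\hat B\to Tr\rho\hat B$ for every bounded $\hat B$, and in particular $\langle\rho_k\psi,\phi\rangle\to\langle\rho\psi,\phi\rangle$ for all $\psi,\phi\in\scrH$ (take $\hat B=|\phi\rangle\langle\psi|$). Gr\"umm's convergence theorem for the trace class then applies: for positive operators, weak convergence against all rank-one operators together with convergence of the trace norms $\nu(\rho_k)=Tr|\rho_k|=1\to 1=\nu(\rho)$ forces $\nu(\rho_k-\rho)=Tr|\rho_k-\rho|\to 0$. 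This is the required $\tau_n$-convergence, and $\tau_w=\tau_n$ follows.

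The main obstacle is exactly this reverse inclusion. In infinite dimensions the trace norm is strictly finer than any weak topology on the full unit ball of $\scrT_1(\scrH)$, so the balls $\nu(\rho_1;\delta)$ are in general not weak-open there; what rescues the statement is the restriction to the convex set of normalized states, where trace preservation supplies the otherwise missing hypothesis of Gr\"umm's theorem, together with the metrizability of $\tau_w$ coming from the countable generation of $\mathcal{E}(\mathcal{G})$, which promotes the sequential equivalence to the topological identity $\tau_w=\tau_n$. A self-contained substitute for Gr\"umm's theorem, cutting $\rho$ by the finite-rank spectral projection onto its dominant eigenvalues and controlling the tails by the rapid decay of the $\lambda_n$ noted after Theorem \ref{TH1}, would make the argument depend only on convexity and positivity rather than on an external convergence theorem.
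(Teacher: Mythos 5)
Your proposal is correct in substance but does considerably more than the paper, and the comparison is worth spelling out. The paper gives no explicit proof of Corollary \ref{CR8}: the implicit argument is just the Lipschitz estimate of Theorem \ref{TH2}, $|Tr\,\hat A(\rho-\rho_1)|\le K(\hat A)\,Tr|\rho-\rho_1|$, which generalises the earlier remark for bounded $\hat A$ and gives $\nu(\rho_1;\delta)\subseteq\scrN(\rho_1;\hat A;K(\hat A)\delta)$, hence that every weak-open set is a union of trace-norm balls. That is only one half of the basis property ($\tau_w\subseteq\tau_n$). You are right that the literal statement also requires each ball to be weak-open, i.e.\ $\tau_n\subseteq\tau_w$, and that this is where the substance lies; the paper never addresses it, even though Corollary \ref{CR9} tacitly uses the full identity $\tau_w=\tau_n$. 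Your route to the reverse inclusion --- countable generation of $\mathcal{E}(\mathcal{G})$ by PBW monomials to get metrizability of $\tau_w$, the sequential reduction, and then Gr\"umm's theorem for positive trace-class operators of constant trace --- is the standard and essentially correct way to see that all reasonable weak topologies collapse to the trace norm on the set of density operators, so your argument actually proves the corollary as stated rather than half of it.

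Two caveats you should make explicit. First, your reverse inclusion leans at two points (separation of points, and feeding rank-one operators $|\phi\rangle\langle\psi|$ into Gr\"umm's theorem) on the claim that $\tau_w$ refines the weak$^{\ast}$ topology of $\scrB(\scrH)$. The paper asserts this with a citation but never proves it, and it is not obvious for an $O^{\ast}$-algebra whose weak topology is generated only by the functionals $a_Q$ with $\hat A\in dU(\mathcal{E}(\mathcal{G}))$, since rank-one projections onto vectors of $\mathcal D$ need not belong to the algebra; your proof is therefore conditional to exactly the same degree as the paper's surrounding discussion. Second, Gr\"umm's theorem at $p=1$ should be invoked in its positive-operator form (weak operator convergence plus convergence of traces implies trace-norm convergence), which is what you use and what your proposed finite-rank truncation via the rapid decay of the eigenvalues would reprove self-containedly.
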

\begin {corollary}\label{CR9}
The weak topology on $ \EsubS(dU(\mathcal{E}(\mathcal{G})))$ is Hausdorff.
\end{corollary}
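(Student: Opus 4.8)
The plan is to leverage Corollary~\ref{CR8}, which already supplies a convenient basis for the weak topology, and thereby reduce the Hausdorff property to the elementary fact that the trace norm is a genuine metric. First I would recall that by Corollary~\ref{CR8} the collection $\{\nu(\rho_1;\delta) : \rho_1 \in \EsubS(dU(\mathcal{E}(\mathcal{G}))),\ \delta > 0\}$ is an open basis for the weak topology; in particular every such trace-norm ball is weakly open. This is the crucial input, since it converts a statement about the weak topology generated by the functions $a_Q$ into a statement about the metric $d(\rho_1,\rho_2) = Tr|\rho_1 - \rho_2|$ induced by the trace norm $\nu$ on $\mathcal{T}_{1}(\mathcal{A}) \subset \scrT_{1}(\scrH)$.

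Next I would argue the separation directly. Given two distinct states $\rho_1 \neq \rho_2$ in $\EsubS(dU(\mathcal{E}(\mathcal{G})))$, their difference is a nonzero self-adjoint trace-class operator, so $\delta_0 = Tr|\rho_1 - \rho_2| > 0$ because $\nu$ is a genuine norm on $\scrT_{1}(\scrH)$. Setting $\delta = \delta_0/2$, I claim the two balls $\nu(\rho_1;\delta)$ and $\nu(\rho_2;\delta)$ are disjoint: if some $\rho$ lay in both, then the triangle inequality for the trace norm would give $Tr|\rho_1 - \rho_2| \leq Tr|\rho_1 - \rho| + Tr|\rho - \rho_2| < \delta + \delta = \delta_0$, contradicting $Tr|\rho_1 - \rho_2| = \delta_0$. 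Since both balls are weakly open by Corollary~\ref{CR8}, this exhibits disjoint weak neighbourhoods separating $\rho_1$ and $\rho_2$, which is exactly the Hausdorff condition.

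Equivalently, one may phrase the whole argument as: Corollary~\ref{CR8} shows that the weak topology coincides with the metric topology induced by $d$, and every metric topology is Hausdorff. I expect no genuine obstacle here, since the substantive work has already been carried out in Theorem~\ref{TH2} and Corollary~\ref{CR8}, which together establish the equivalence of the weak and trace-norm topologies; all that remains is the standard observation that the trace norm separates distinct states. The only point requiring a word of care is that the difference of two states, although not itself a state, still lies in $\mathcal{T}_{1}(\mathcal{A})$ and is measured by the norm $\nu$, so that the metric $d$ is well defined and positive-definite on the state space.
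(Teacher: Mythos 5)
Your argument is correct and is exactly the intended one: the paper states Corollary~\ref{CR9} without an explicit proof, as an immediate consequence of Corollary~\ref{CR8} together with the standard fact that a metric (here the trace norm, which is positive-definite on $\scrT_{1}(\scrH)$) separates distinct points. Your careful handling of the triangle-inequality separation and the remark that $\rho_1-\rho_2$ lies in $\mathcal{T}_{1}(\mathcal{A})$ fill in precisely the steps the paper leaves implicit.
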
 The next results determine the interior of sets of states satisfying equations of the form $Tr \rho \hat A = \alpha$ .

\begin{lemma}\label{LM9}
Given $ \rho \in  \EsubS (dU(\mathcal{E}(\mathcal{G})))$ with $Tr \rho \hat A  = 0$ then for all $\epsilon > 0$ there we can construct states  $ \sigma \in  \EsubS(dU(\mathcal{E}(\mathcal{G})))$ such that $ \sigma \in \nu( \rho ; \epsilon)$ and $Tr \sigma \hat A \ne 0$.
\end{lemma}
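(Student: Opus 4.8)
The plan is to construct $\sigma$ as a convex combination of $\rho$ with a single pure state whose $\hat A$-expectation is nonzero: mixing shifts the value of $a_Q$ linearly in the mixing parameter while moving $\rho$ only a controllable trace-norm distance. Throughout I assume $\hat A = dU(x) \neq 0$, since otherwise $Tr\,\sigma\hat A$ vanishes for every $\sigma$ and the statement is vacuous.

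First I would produce a unit vector $\psi \in \mathcal{D}^{\infty}(U)$ with $\langle \hat A\psi,\psi\rangle = \alpha \neq 0$. Because $\hat A$ is symmetric on the dense, invariant domain $\mathcal{D}^{\infty}(U)$, the quadratic form $\psi \mapsto \langle\hat A\psi,\psi\rangle$ is real-valued, and by the polarization identity on the complex Hilbert space $\scrH$ its values determine $\langle\hat A\psi,\chi\rangle$ for all $\psi,\chi \in \mathcal{D}^{\infty}(U)$. Hence if this form vanished identically we would have $\hat A = 0$ on the dense domain $\mathcal{D}^{\infty}(U)$, contradicting $\hat A \neq 0$; so a suitable $\psi$ exists. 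Here the use of \emph{complex} scalars in the polarization is essential.

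Next I would verify that the rank-one projection $P_{\psi} = |\psi\rangle\langle\psi|$ is a state in $\EsubS(dU(\mathcal{E}(\mathcal{G})))$: it is self-adjoint, positive and normalized, and since $P_{\psi}\scrH = \CC\psi \subset \mathcal{D}^{\infty}(U)$ while $\hat B P_{\psi}$ is rank one, hence trace class, for every $\hat B \in \mathcal{A}$, it lies in $\mathcal{T}_{1}(\mathcal{A})$. Set $\sigma_t = (1-t)\rho + t\,P_{\psi}$ for $t \in (0,1]$. As a convex combination of two elements of the (convex, linear-over-$\mathcal{T}_{1}(\mathcal{A})$) state space it again belongs to $\EsubS(dU(\mathcal{E}(\mathcal{G})))$, and by linearity of $a_Q$ on $\EsubS(\mathcal{A})$,
\[
Tr\,\sigma_t\hat A = (1-t)\,Tr\,\rho\hat A + t\,Tr\,P_{\psi}\hat A = t\,\langle\hat A\psi,\psi\rangle = t\alpha \neq 0 .
\]

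Finally I would bound the trace-norm distance: $\nu(\sigma_t - \rho) = t\,Tr|P_{\psi} - \rho| \leq t\,(Tr|P_{\psi}| + Tr|\rho|) = 2t$, using the triangle inequality for the trace norm together with $Tr|P_{\psi}| = Tr|\rho| = 1$. Choosing any $t$ with $0 < t < \min(1,\epsilon/2)$ then gives $\sigma := \sigma_t \in \nu(\rho;\epsilon)$ with $Tr\,\sigma\hat A \neq 0$, as required. I do not expect a serious obstacle; the only points demanding care are the existence of the good vector $\psi$ (where complex polarization is used) and the check that $P_{\psi}$ genuinely lies in $\mathcal{T}_{1}(\mathcal{A})$ rather than merely in $\scrT_{1}(\scrH)$. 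A variant using an additive traceless perturbation $\rho + tH$ would instead require a separate positivity argument, which the convex-combination construction avoids automatically.
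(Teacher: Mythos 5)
Your argument is correct and complete; note that the paper itself states Lemma \ref{LM9} without supplying any proof, so there is no textual argument to compare against, and your convex-mixing construction $\sigma_t=(1-t)\rho+t\,P_{\psi}$ is the natural one. You have attended to the two points that genuinely need checking: the existence of a unit vector $\psi\in\mathcal{D}^{\infty}(U)$ with $\langle\hat A\psi,\psi\rangle\neq 0$ (via complex polarization on the dense invariant domain, using $\hat A\neq 0$), and the membership $P_{\psi}\in\mathcal{T}_{1}(\mathcal{A})$, which guarantees $\sigma_t$ is an admissible state and that $Tr\,\sigma_t\hat A$ is defined and linear in $t$. The only wording I would adjust: when $\hat A=0$ the assertion is false rather than vacuous, so the nonvanishing of $\hat A$ must be read as a standing hypothesis of the lemma (it is made explicit only in Proposition \ref{PR10}).
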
 
\begin{proposition} \label{PR10}
Let $\hat A \in dU(\mathcal{E}(\mathcal{G}))$ be a non-zero essentially self-adjoint operator, then
 $\text{int}\{ \rho | Tr  \rho \hat A = 0 \} = \emptyset$ where the interior is taken with respect to the weak topology on $\EsubS(dU(\mathcal{E}(\mathcal{G})))$.
\end{proposition}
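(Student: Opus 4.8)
The plan is to deduce the proposition almost immediately from Lemma~\ref{LM9} by a contradiction argument, using Corollary~\ref{CR8} to translate the weak-topology interior into a statement about trace-norm balls. Write $S := \{\rho \in \EsubS(dU(\mathcal{E}(\mathcal{G}))) \mid Tr\,\rho\hat A = 0\}$ for the set whose interior we must show is empty. First I would suppose $\text{int}(S) \neq \emptyset$ and fix an interior point $\rho_0$. Since by Corollary~\ref{CR8} the balls $\nu(\rho_1;\delta)$ form an open basis for the weak topology, there is one such ball with $\rho_0 \in \nu(\rho_1;\delta) \subseteq S$; as $Tr\,|\rho_0-\rho_1| < \delta$, setting $\epsilon := \delta - Tr\,|\rho_0-\rho_1| > 0$ and invoking the triangle inequality for the trace norm yields a ball $\nu(\rho_0;\epsilon) \subseteq \nu(\rho_1;\delta) \subseteq S$ centred at $\rho_0$ itself. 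So every state within trace distance $\epsilon$ of $\rho_0$ satisfies $Tr\,\rho\hat A = 0$.

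With such an $\epsilon$ in hand, the contradiction is forced by Lemma~\ref{LM9}. Because $\rho_0 \in S$ means $Tr\,\rho_0\hat A = 0$, the lemma produces a state $\sigma \in \nu(\rho_0;\epsilon)$ with $Tr\,\sigma\hat A \neq 0$; but $\nu(\rho_0;\epsilon) \subseteq S$ forces $Tr\,\sigma\hat A = 0$, which is absurd. Hence no interior point can exist and $\text{int}(S) = \emptyset$. A more structural way to phrase the same step is to read Lemma~\ref{LM9} as asserting that every zero-state of $a_Q$ is a weak limit of states on which $a_Q$ does not vanish, so that $S$ is contained in the weak closure of its complement and therefore has empty interior; both formulations are interchangeable here.

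The substantive content of the proposition is carried entirely by Lemma~\ref{LM9}, which constructs the perturbed states $\sigma$, so the proposition itself is a short topological packaging of that construction rather than a new computation. The one point that genuinely needs Corollary~\ref{CR8} is the reduction from an arbitrary weak-open neighbourhood of $\rho_0$ to a concrete trace ball $\nu(\rho_0;\epsilon)$, since Lemma~\ref{LM9} is phrased in terms of trace-norm neighbourhoods; without the equivalence of the weak and trace-norm bases one could not feed $\rho_0$ and $\epsilon$ directly into the lemma. That equivalence rests in turn on the trace-norm continuity of $a_Q$ secured in Theorem~\ref{TH2} and Corollary~\ref{CR7}, so I expect whatever delicacy there is to live in that earlier bookkeeping rather than in the final deduction, which is routine.
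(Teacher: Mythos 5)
Your proof is correct and takes essentially the same route the paper intends: the paper states Lemma~\ref{LM9} immediately before Proposition~\ref{PR10} and leaves the deduction implicit, and your argument---passing from the weak-topology interior to a trace-norm ball $\nu(\rho_0;\epsilon)$ via Corollary~\ref{CR8} and then contradicting $\nu(\rho_0;\epsilon)\subseteq S$ with the perturbed state $\sigma$ supplied by Lemma~\ref{LM9}---is exactly that deduction. Your closing remark correctly locates the substantive content in Lemma~\ref{LM9} and Corollary~\ref{CR8} rather than in the proposition itself.
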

\begin{corollary}\label{CR11}
For any standard real number $\alpha$, $\text{int}\{ \rho | Tr \rho \hat A = \alpha \} = \emptyset$.
\end{corollary}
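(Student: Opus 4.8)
The plan is to reduce Corollary~\ref{CR11} to Proposition~\ref{PR10} by an affine shift of the operator, exploiting the single extra fact that every state is normalised, i.e.\ $Tr\,\rho\,\hat I = 1$ for all $\rho \in \EsubS(dU(\mathcal{E}(\mathcal{G})))$.

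First I would set $\hat B = \hat A - \alpha \hat I$. Since $\mathcal{E}(\mathcal{G})$ is unital and $dU$ carries its unit to $\hat I$, we have $\alpha \hat I = dU(\alpha\cdot 1) \in dU(\mathcal{E}(\mathcal{G}))$, so $\hat B \in dU(\mathcal{E}(\mathcal{G}))$ as well. Next I would check that $\hat B$ inherits the hypotheses of Proposition~\ref{PR10}: it is symmetric on $\mathcal{D}^{\infty}(U)$, and because $\alpha \hat I$ is bounded and everywhere defined its closure is $\overline{\hat B} = \overline{\hat A} - \alpha \hat I$, which is self-adjoint since $\overline{\hat A}$ is self-adjoint and a bounded self-adjoint perturbation preserves self-adjointness. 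Hence $\hat B$ is essentially self-adjoint.

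Then I would use normalisation to rewrite the level set. For any state $\rho$,
\[
Tr\,\rho\,\hat B = Tr\,\rho\,\hat A - \alpha\, Tr\,\rho\,\hat I = Tr\,\rho\,\hat A - \alpha ,
\]
so that, as subsets of $\EsubS(dU(\mathcal{E}(\mathcal{G})))$,
\[
\{ \rho \mid Tr\,\rho\,\hat A = \alpha \} = \{ \rho \mid Tr\,\rho\,\hat B = 0 \}.
\]
Applying Proposition~\ref{PR10} to the non-zero essentially self-adjoint operator $\hat B$ then shows the interior of the right-hand set is empty, which is exactly the claim.

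The only genuine obstacle is the hypothesis that $\hat B$ be non-zero, i.e.\ that $\hat A \neq \alpha \hat I$. If $\hat A$ were the scalar operator $\alpha \hat I$, then every state would satisfy $Tr\,\rho\,\hat A = \alpha$ and the level set would be all of $\EsubS(dU(\mathcal{E}(\mathcal{G})))$, with non-empty interior; so the statement must be read with $\hat A$ not a scalar multiple of the identity, as is automatic for any genuine physical observable. Under that standing assumption $\hat A - \alpha \hat I \neq 0$ for every real $\alpha$, and the reduction goes through verbatim.
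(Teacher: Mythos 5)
Your reduction of Corollary~\ref{CR11} to Proposition~\ref{PR10} via the shift $\hat B = \hat A - \alpha\hat I$ together with the normalisation $Tr\,\rho\,\hat I = 1$ is correct and is exactly the argument the paper intends (the corollary is stated without an explicit proof, and this affine shift is the only step needed). Your added observation that the statement tacitly requires $\hat A \neq \alpha\hat I$ --- since otherwise the level set is all of $\EsubS(dU(\mathcal{E}(\mathcal{G})))$ and has non-empty interior --- is a legitimate refinement of the non-zero hypothesis carried over from Proposition~\ref{PR10} that the paper leaves implicit.
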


\subsection { Quantum real numbers}

\begin{definition} We say that a quantum real number (qr-number) is a section of the sheaf of Dedekind reals $\RsubD(\EsubS(\mathcal A))$ in $\Shv (\EsubS(\mathcal A))$. $\RsubD(\EsubS(\mathcal A))$ is isomorphic to $C(\EsubS(\mathcal A))$, the sheaf of germs of continuous functions on $\EsubS(A)$.
\end{definition} For each non-empty $U \in \mathcal{O}( \EsubS(\mathcal{A}))$, the subsheaf $\RsubD(U)$: (1) has integers $\ZZ(U)$, rationals 
$\QQ(U)$ and Cauchy reals $\RsubC(U)$ as subsheaves of 
locally constant functions, (2) has
orders $<$ and  $\leq$ compatible with those on $\QQ(U)$  but $<$ is not total because trichotomy, $x>0 \lor x=0 \lor x<0$, is not satisfied. $\leq$ is not equivalent to $< \lor =$,  (3) is closed under the commutative, associative,
distributive binary operations  $+$  and $\times$, has $0 \ne 1$
and is a residue field, i.e., if $ b \in \RsubD(U)$ is not invertible then $b = 0$ and (4) has a distance 
function $|\cdot|$ which defines a metric with respect to which it is a complete 
metric space in which $\QQ(U)$ dense. 
A section $b\in \RsubD(U)$ is apart from $0$ iff  $|b| > 0$. $\RsubD(U)$ is an apartness field, i.e., $\forall b \in \RsubD(U), \;  |b| > 0 $ iff $b$ is invertible.

\subsubsection{Locally linear qr-numbers}Every qr-number is a continuous real function of locally linear qr-numbers $\AAA(\EsubS(\mathcal A))$ defined as locally linear functions on the state space $\EsubS(\mathcal A)$.
\begin{definition}\cite{adelman1}
Let $U$ be an open subset of $\EsubS(\mathcal A)$.  A  function $f : U \to \RR$, the 
standard reals, is {\em locally linear} if each $\rho \in U$ has an open neighborhood $U_{\rho} \subset U$ with an essentially self-adjoint operator $\hat A \in \mathcal A$ such that $f|_{U_{\rho} } = a_{Q}(U_{\rho} )$.
\end{definition}The global elements of $\AAA(\EsubS(\mathcal A))$ are given by the functions $a_{Q}$.
 The set $\AAA(\EsubS(\mathcal A))$ is dense in $\RsubD(\EsubS(\mathcal A))$.

On any non-empty open set $W$, $a_Q(W)
= b_Q(W)$ if and only if the defining operators are equal, $\hat A = \hat
B$. Therefore knowing $a_Q(W)$ on any open set $W \neq \emptyset$ is equivalent to knowing the operator $\hat A$.

We will need the following properties of  locally linear qr-numbers.

\subsubsection{ $\epsilon$ sharp collimation of a locally linear qr-number}
The condition $W$ for $\epsilon$ sharp collimation of an attribute $\hat A$ in an interval $I_{a}$ with midpoint $a_{0}$ and width $|I_{a}|$  is that  for $0 < \epsilon < 1$, both
$ |a_{Q}(W) - a_{0}| \leq \epsilon |I_{a}|/2 $ and $ a_0 -  |I_{a}|/2\ \leq\ a_{Q}(W) -\frac {s(a)_{Q}(W)}{ \sqrt{\epsilon}}
\ < a_{Q}(W) +\frac {s(a)_{Q}(W)}{ \sqrt{\epsilon}} \ \leq\ a_{0} +  |I_{a}|/2$
where $s(a)_{Q}(W) = \sqrt{(a^{2}_{Q}(W)  - a_{Q}(W)^{2})}$. Hence $a_{Q}(W) \in I_{a}  \; \text{and}  \; (a^{2}_{Q}(W) - a_{Q}(W)^{2}) \leq  \frac{\epsilon}{4}|I_{a}|^{2}$ so that the qr-number $a_{Q}(W)$ is equal to the constant qr-number $a_{0} 1_{Q}(W)$ with an accuracy less than or equal to $\epsilon$. 
  
\subsubsection{ $\epsilon$-location in subintervals of $\RR$}
The $\epsilon$-location of $\hat A \in\mathcal{A}$ in a sub-interval $I_{a}$ of the numerical range of $\hat A$ is defined using $\hat P^{\hat A}(I_{a})$, the spectral projection operator of $\hat A$ on
$I_{a}$ which has the qr-number value $\pi^{\hat A}(I_{a})_{Q}(W)$  in the condition $W$.

\begin{definition} In the condition $W$, $\hat A$ is 
$\epsilon$-located in the interval $I_{a}$ if
\begin{equation}
(1 - \epsilon) < \pi^{\hat A}(I_{a})_{Q}(W)  \leq  1.
\end{equation}
\end{definition}
The concepts of $\epsilon$-location and $\epsilon$ sharp collimation in an
interval $I_{a}$ are closely related.
\begin{theorem}\cite{durt2} If the collimation of an attribute $\hat A$ of a quantum system in an interval $I_{a}$ is
$\epsilon$ sharp on the open set $W$, then $\hat A$ is $\epsilon$-located in $I_{a}$ in the condition $W$.\end{theorem}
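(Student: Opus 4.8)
The plan is to recognise this statement as a quantum-mechanical Chebyshev (Bienaym\'e) inequality and to prove it pointwise over the open condition $W$. Since the Dedekind reals $\RsubD(\EsubS(\mathcal A))$ are realised as continuous real-valued functions, every qr-number (in)equality in the two definitions is, read externally over $W$, an (in)equality of the corresponding functions at each state $\rho \in W$: concretely $a_Q(\rho) = Tr(\rho\hat A)$, $a^2_Q(\rho) = Tr(\rho\hat A^2)$, so $s(a)_Q(\rho)^2 = Tr(\rho\hat A^2) - (Tr\,\rho\hat A)^2 = Tr(\rho(\hat A - \mu)^2)$ is the variance of $\hat A$ in $\rho$, and $\pi^{\hat A}(I_a)_Q(\rho) = Tr(\rho\,\hat P^{\hat A}(I_a))$. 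The last quantity is a genuine qr-number: because $\hat P^{\hat A}(I_a)$ is bounded, the bounded-operator case discussed after the preceding Lemma makes $\rho \mapsto Tr(\rho\,\hat P^{\hat A}(I_a))$ trace-norm continuous, hence weakly continuous by Corollary~\ref{CR8}. It therefore suffices to prove, for each fixed $\rho \in W$, the numerical inequality $Tr(\rho\,\hat P^{\hat A}(I_a)) > 1 - \epsilon$.

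Next I would unpack the two $\epsilon$-sharp collimation hypotheses at a fixed $\rho \in W$, writing $\mu = Tr(\rho\hat A)$ and $\sigma = s(a)_Q(\rho) \ge 0$. The first condition gives $|\mu - a_0| \le \epsilon|I_a|/2 < |I_a|/2$, so $\mu$ lies in the interior of $I_a$. The second condition states \emph{exactly} that the symmetric interval $J = [\mu - \sigma/\sqrt{\epsilon},\, \mu + \sigma/\sqrt{\epsilon}]$ is contained in $I_a$. Taking complements, $I_a^c \subseteq J^c$, so every spectral value $\lambda \in I_a^c$ satisfies $|\lambda - \mu| > \sigma/\sqrt{\epsilon}$, i.e. $(\lambda - \mu)^2 > \sigma^2/\epsilon$ on $I_a^c$. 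This containment is precisely the input a Chebyshev estimate requires, and it already carries the strict inequality.

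Then I would run the spectral-theorem computation. Let $\mu_\rho$ be the probability measure $\Delta \mapsto Tr(\rho\,\hat P^{\hat A}(\Delta))$ of the self-adjoint closure of $\hat A$. Restricting the variance integral to $I_a^c$ and using the bound just derived gives
\[
\sigma^2 = \int_{\RR}(\lambda-\mu)^2\,d\mu_\rho(\lambda) \;\ge\; \int_{I_a^c}(\lambda-\mu)^2\,d\mu_\rho(\lambda) \;\ge\; \frac{\sigma^2}{\epsilon}\,Tr\!\big(\rho\,\hat P^{\hat A}(I_a^c)\big).
\]
If $\sigma > 0$ and $\mu_\rho(I_a^c) > 0$, the middle step is strict (the integrand strictly exceeds $\sigma^2/\epsilon$ on a set of positive measure), yielding $Tr(\rho\,\hat P^{\hat A}(I_a^c)) < \epsilon$; if $\mu_\rho(I_a^c) = 0$ this trace is $0 < \epsilon$; and if $\sigma = 0$ the spectral mass concentrates at $\mu \in \mathrm{int}(I_a)$, again forcing $Tr(\rho\,\hat P^{\hat A}(I_a^c)) = 0$. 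In every case $Tr(\rho\,\hat P^{\hat A}(I_a^c)) < \epsilon$, so from $\hat P^{\hat A}(I_a) + \hat P^{\hat A}(I_a^c) = \hat I$ and $Tr(\rho\hat I) = 1$ we get $\pi^{\hat A}(I_a)_Q(\rho) = 1 - Tr(\rho\,\hat P^{\hat A}(I_a^c)) > 1 - \epsilon$ for every $\rho \in W$, which is exactly $\epsilon$-location of $\hat A$ in $I_a$ in the condition $W$.

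I expect the main obstacle to be not the inequality but the bookkeeping at the interface between the internal intuitionistic order on $\RsubD(\EsubS(\mathcal A))$ and the external pointwise order: I must verify that the sheaf reading of the collimation hypotheses really delivers the pointwise containment $J \subseteq I_a$ at every $\rho \in W$ (strict and non-strict qr-number inequalities behave differently in the Heyting setting), and that the resulting bound $\pi^{\hat A}(I_a)_Q > (1-\epsilon)1_Q$ is valid as a \emph{strict} inequality of sections over all of $W$, not merely on a dense open subset. Securing this strictness uniformly---rather than the easy non-strict $\ge$ that a crude Chebyshev bound gives---is the only genuinely delicate point, and it is what the strict excess $(\lambda-\mu)^2 > \sigma^2/\epsilon$ on $I_a^c$ is designed to supply.
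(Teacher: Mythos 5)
The paper states this theorem without proof, deferring entirely to \cite{durt2}, so there is no in-paper argument to compare against. Your Chebyshev-type proof --- reading the qr-number inequalities pointwise over $W$, extracting the containment $[\mu-\sigma/\sqrt{\epsilon},\,\mu+\sigma/\sqrt{\epsilon}]\subseteq I_{a}$ from the collimation hypothesis, and bounding $Tr(\rho\,\hat P^{\hat A}(I_{a}^{c}))$ by $\epsilon$ via the variance integral --- is correct and is the standard route taken in the cited reference, including the care needed to turn the crude bound $\leq\epsilon$ into the strict inequality $\pi^{\hat A}(I_{a})_{Q}(W)>1-\epsilon$ that the definition of $\epsilon$-location requires.
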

\subsubsection{Strictly $\epsilon$ sharp collimation.}
\begin{definition}
Let $ \hat P^{\hat A}(I_{a})$ be the spectral projection operator of $\hat A$ on $I_{a}$, then $\hat A$ is strictly $\epsilon$ sharp collimated on $I_{a}$ in $W$ if, in addition to being $\epsilon$ sharp collimated on $I_{a}$ in $W$, $ Tr | \rho - \hat P^{\hat A}(I_{a}) \rho \hat P^{\hat A}(I_{a}) | < \epsilon$ for all $\rho \in W$.
\end{definition}  
\begin{theorem} If $\rho_{m} = |\psi_{m}\rangle\langle \psi_{m}|$ with $ \hat A\psi_{m} = a_{0}\psi_{m}$, e.g.  $\rho_{m}$ is a pure eigenstate of $\hat A$, such that $\forall \epsilon > 0, \exists 0< \delta < \epsilon/3 $ with $Tr|\hat P^{\hat A}(I_{a}) \rho_{m}\hat P^{\hat A}(I_{a})  - \rho_{m}| < \delta$, then if $\hat A$ is $\epsilon$ sharp collimated on $I_{a}$ in $\nu(\rho_{m}, \delta)$ then it is strictly $\epsilon$ sharp collimated in $I_{a}$ on $\nu(\rho_{m}, \delta)$.
\end{theorem}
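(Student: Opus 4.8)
The plan is to notice that $\epsilon$ sharp collimation on $\nu(\rho_{m},\delta)$ is already part of the hypothesis, so the entire task reduces to verifying the one extra inequality in the definition of strict $\epsilon$ sharp collimation: that $Tr|\rho - \hat P^{\hat A}(I_{a})\,\rho\,\hat P^{\hat A}(I_{a})| < \epsilon$ for every $\rho \in \nu(\rho_{m},\delta)$. Writing $\hat P = \hat P^{\hat A}(I_{a})$ for brevity, I would insert the eigenstate $\rho_{m}$ as an intermediary and decompose the difference as
\begin{equation}
\rho - \hat P \rho \hat P = (\rho - \rho_{m}) + (\rho_{m} - \hat P \rho_{m} \hat P) + \hat P(\rho_{m} - \rho)\hat P,
\end{equation}
and then apply the triangle inequality for the trace norm $\nu(\cdot) = Tr|\cdot|$.

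Each of the three terms is then bounded by $\delta$. The first obeys $Tr|\rho - \rho_{m}| < \delta$ straight from the definition $\nu(\rho_{m},\delta) = \{\rho : Tr|\rho_{m} - \rho| < \delta\}$ of the basic open set. The second obeys $Tr|\rho_{m} - \hat P \rho_{m} \hat P| < \delta$ by the standing hypothesis on $\rho_{m}$. For the third I would invoke the standard estimate that conjugation by a contraction does not increase the trace norm: since $\hat P$ is an orthogonal projection we have $\|\hat P\| \leq 1$, so $Tr|\hat P(\rho_{m} - \rho)\hat P| \leq \|\hat P\|^{2}\, Tr|\rho_{m} - \rho| < \delta$. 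Adding the three estimates gives $Tr|\rho - \hat P \rho \hat P| < 3\delta$, and because $\delta < \epsilon/3$ this is strictly less than $\epsilon$, which is exactly the strict collimation inequality.

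Since the argument is essentially a single triangle-inequality estimate, I do not expect a serious obstacle; the only step needing care is the third-term bound. There I would appeal to the ideal property of the trace-class operators inside $\scrB(\scrH)$, which gives $Tr|BTC| \leq \|B\|\,\|C\|\, Tr|T|$ for bounded $B, C$ and trace-class $T$, specialised to $B = C = \hat P$ with $\|\hat P\| = 1$. It is worth recording that $\rho_{m}$ itself lies in $\nu(\rho_{m},\delta)$, so the hypotheses are mutually consistent, and that nothing beyond the trace-norm data is required for the strict part: the sharp-collimation inequalities $|a_{Q}(W) - a_{0}| \leq \epsilon|I_{a}|/2$ and the associated dispersion bound carry over unchanged from the assumed $\epsilon$ sharp collimation. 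Finally, I would note that for an exact eigenstate with $a_{0}$ in the interior of $I_{a}$ one has $\hat P \psi_{m} = \psi_{m}$, so $\hat P \rho_{m} \hat P = \rho_{m}$ and the second term vanishes identically; the hypothesis $Tr|\hat P \rho_{m}\hat P - \rho_{m}| < \delta$ is precisely the robust, approximate form of this fact that makes the estimate go through.
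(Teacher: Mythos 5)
Your proposal is correct and follows essentially the same route as the paper: the identical three-term decomposition $\rho - \hat P\rho\hat P = (\rho-\rho_{m}) + (\rho_{m}-\hat P\rho_{m}\hat P) + \hat P(\rho_{m}-\rho)\hat P$, the triangle inequality for the trace norm, and the bound $3\delta < \epsilon$. The only (minor) difference is in justifying $Tr|\hat P(\rho-\rho_{m})\hat P| \leq Tr|\rho-\rho_{m}|$, where your appeal to the ideal property $Tr|BTC| \leq \|B\|\,\|C\|\,Tr|T|$ is the cleaner, more standard argument compared with the paper's kernel and polar-decomposition remark.
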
\begin{proof} $\forall \rho \in \nu(\rho_{m}, \delta)$ ,
 $Tr | \rho - \hat P^{\hat A}(I) \rho \hat P^{\hat A}(I) | \leq Tr | \rho - \rho_{m}| + Tr | \rho_{m} - \hat P^{\hat A}(I)\rho_{m} \hat P^{\hat A}(I)| + Tr | \hat P^{\hat A}(I)( \rho  - \rho_{m})\hat P^{\hat A}(I)| < 3 \delta < \epsilon$
because $Tr | \hat P^{\hat A}(I)( \rho  - \rho_{m})\hat P^{\hat A}(I)| \leq  Tr |( \rho  - \rho_{m})|$. The definition of the polar decomposition of bounded operators implies that $Ker  | \hat P^{\hat A}(I)( \rho  - \rho_{m})\hat P^{\hat A}(I)| = Ker  \hat P^{\hat A}(I)( \rho  - \rho_{m})\hat P^{\hat A}(I) \supseteq Ker (\rho  - \rho_{m}) = Ker | (\rho  - \rho_{m})|$.
\end{proof}

\subsubsection{ The extended qr-numbers}
Extended qr-numbers are just extended Dedekind real numbers $\RsubD^{0}(\EsubS(\mathcal M))$ so that an extended locally linear qr-number is a prolongation by zero \cite{swan} of a locally linear qr-number $a_Q(W)$ defined on the open set $W \subset  \EsubS(\mathcal M)$. The prolongation by zero of a qr-number that is a continuous function of locally linear functions defined on a common open set $W$ is given by  the same function of the prolongation by zero of each of the locally linear functions.

In the following, the discussion is in terms of qr-numbers, this term may be also used for extended  qr-numbers, however when this occurs the meaning should be clear from the context.

\subsubsection{qr-numbers generated by a single operator}

If $a_{Q}$ is defined by an essentially self-adjoint operator $\hat A \in \mathcal{A}$, then we can define the sheaf of locally linear qr-numbers $\AAA^{a_{Q}}$ and the sheaf of qr-numbers, $\RsubD^{a_{Q}}(\EsubS(\mathcal A))$, generated by $\hat A$. For $\hat A \in \mathcal{A}$ and each non-empty open subset $U \in \mathcal{O}(\EsubS(\mathcal A))$ there is a locally linear qr-number $a_{Q}(U)$ and for each continuous function $F : \RR \to \RR$ a qr-number $F(a_{Q}(U))$ defined to extent $U$. Then $\AAA^{a_{Q}}(\EsubS(\mathcal A))$ is a sub-sheaf of $\RsubD^{a_{Q}}(\EsubS(\mathcal A))$ and both are sub-sheaves of $\RsubD(\EsubS(\mathcal A))$. By construction both contain the integers $\ZsubD(\EsubS(\mathcal A))$, rationals $\QsubD(\EsubS(\mathcal A))$ and Cauchy reals $\RsubC(\EsubS(\mathcal A))$ as subsheaves of locally constant functions. Thus $\RsubD^{a_{Q}}(\EsubS(U))$ contains all the standard real number $\alpha(U)$ defined as constant functions on $U$. The qr-numbers  $\RsubD^{a_{Q}}(\EsubS(U))$ can be extended by zero to be globally defined. 

The order relations and the distance function defined on $\RsubD(\EsubS(\mathcal A))$ restrict to 
$\RsubD^{a_{Q}}(\EsubS(\mathcal A))$ and proofs of Stout \cite{stout} that use the properties of Dedekind cuts in $\QsubD(\EsubS(\mathcal A))$ hold for $\RsubD^{a_{Q}}(\EsubS(\mathcal A))$ so that,
\begin{proposition}\label{PR28}
When the metric topology $T$ is restricted to $\RsubD^{a_{Q}}(\EsubS(\mathcal A))$ the sheaf of rational numbers  $\QsubD(\EsubS(\mathcal A))$ is dense in $\RsubD^{a_{Q}}(\EsubS(\mathcal A))$.
If $\hat A$ is not a constant  operator the quantum real numbers $\RsubD^{a_{Q}}(\EsubS(\mathcal A))$ form a complete metric space.
\end{proposition}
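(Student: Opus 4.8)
The plan is to establish the two assertions separately, in each case reducing to Stout's results for the full object $\RsubD(\EsubS(\mathcal A))$ and then checking that the reduction stays inside the sub-sheaf $\RsubD^{a_{Q}}(\EsubS(\mathcal A))$.

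For density, the key observation is that every section of $\RsubD^{a_{Q}}(\EsubS(\mathcal A))$ is in particular a section of $\RsubD(\EsubS(\mathcal A))$, hence is presented by a Dedekind cut in $\QsubD(\EsubS(\mathcal A))$, and that by construction $\QsubD(\EsubS(\mathcal A))$ is already a sub-sheaf of $\RsubD^{a_{Q}}(\EsubS(\mathcal A))$. Stout's proof \cite{stout} that $\QsubD(\EsubS(\mathcal A))$ is dense in $\RsubD(\EsubS(\mathcal A))$ only squeezes a given cut between rational approximants drawn from $\QsubD(\EsubS(\mathcal A))$; since those approximants already lie in the sub-sheaf, the identical argument yields density of $\QsubD(\EsubS(\mathcal A))$ in $\RsubD^{a_{Q}}(\EsubS(\mathcal A))$. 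Equivalently, the chain $\QsubD(\EsubS(\mathcal A)) \subseteq \RsubD^{a_{Q}}(\EsubS(\mathcal A)) \subseteq \RsubD(\EsubS(\mathcal A))$ together with density of $\QsubD$ in the largest object makes density in the intermediate sub-sheaf immediate for the restricted metric topology $T$.

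For completeness, since $\RsubD(\EsubS(\mathcal A))$ is a complete metric space by Stout, it suffices to show that $\RsubD^{a_{Q}}(\EsubS(\mathcal A))$ is closed under limits of Cauchy sequences in $T$. This is where non-constancy of $\hat A$ is used. First I would invoke Corollary \ref{CR11} (via Proposition \ref{PR10}) that every level set $\{\rho : Tr\,\rho\hat A = \alpha\}$ has empty interior, so that $a_{Q}$ is non-constant on every non-empty open set. Fixing a connected open $U$ --- connected because $\EsubS(\mathcal A)$ is convex --- the continuous image $J = a_{Q}(U) \subseteq \RR$ is then a non-degenerate interval. Each section over $U$ is of the form $F \circ a_{Q}$ for a continuous $F$, and I would show that $F \circ a_{Q} \mapsto F|_{J}$ is an isometry onto a subspace of $C(J)$; a Cauchy sequence $(F_{n} \circ a_{Q})$ then corresponds to a uniformly Cauchy sequence $(F_{n}|_{J})$, whose limit $F$ is continuous on $J$, so that $F \circ a_{Q}$ is the limiting section and again lies in $\RsubD^{a_{Q}}(\EsubS(\mathcal A))$. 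Gluing over a cover by connected opens gives closedness, hence completeness.

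The hard part will be the local identification of the sub-sheaf metric with the uniform metric on $C(J)$ and checking it at the level of germs so that the gluing is compatible with the sheaf structure. The non-constancy hypothesis is precisely what makes this work: it guarantees that the range $J$ is a genuine interval and that the factorisation through $a_{Q}$ is faithful. Without it, $a_{Q}$ would be globally constant, $J$ would collapse to a point, and $\RsubD^{a_{Q}}(\EsubS(\mathcal A))$ would reduce to the Cauchy reals $\RsubC(\EsubS(\mathcal A))$, which in a spatial topos need not be Cauchy-complete --- exactly the degenerate case the hypothesis excludes.
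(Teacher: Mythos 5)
For context: the paper offers no argument for this proposition beyond the sentence that precedes it, namely that the order, the distance function and Stout's Dedekind-cut proofs all restrict from $\RsubD(\EsubS(\mathcal A))$ to the sub-sheaf. Your density argument is that same argument made explicit (the chain $\QsubD(\EsubS(\mathcal A)) \subseteq \RsubD^{a_{Q}}(\EsubS(\mathcal A)) \subseteq \RsubD(\EsubS(\mathcal A))$ together with density in the ambient object), and it is fine. Your completeness argument is genuinely different and in fact addresses the point the paper silently skips --- that the limit of a Cauchy sequence, produced by Stout's cut construction inside $\RsubD(\EsubS(\mathcal A))$, must be shown to lie again in the sub-sheaf --- and your use of Proposition~\ref{PR10}/Corollary~\ref{CR11} to obtain a non-degenerate interval $J = a_{Q}(U)$ is exactly the right way to bring the non-constancy hypothesis into play.

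The gap is the step you yourself flag as ``the hard part'': the identification of the restricted metric on sections over $U$ with the uniform metric on $C(J)$. This is not a routine verification, because the distance function on $\RsubD(\EsubS(\mathcal A))$ is itself qr-number-valued, so ``complete metric space'' is an internal statement in $\Shv(\EsubS(\mathcal A))$. Internally, a Cauchy sequence of sections over $U$ is only required to be uniformly Cauchy locally, after refining to an open cover of $U$ that may depend on the index; the moduli of convergence can degenerate as $\rho$ moves through $U$, equivalently as $a_{Q}(\rho)$ approaches the ends of $J$. So you do not automatically get a single uniformly Cauchy sequence $(F_{n}|_{J})$ in $C(J)$, and the reduction to classical completeness of $C(\bar J)$ (which itself needs a word about extending the limit back to a continuous $F:\RR\to\RR$ via Tietze when $J$ is not closed) is not established. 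Until that identification is carried out at the level of germs --- which is also where the gluing over connected opens and the compatibility of the factorisation $F\circ a_{Q}\mapsto F|_{J}$ on overlaps must be checked --- the completeness half is a plausible programme rather than a proof. Your closing remark is likewise slightly off: if $\hat A$ is constant then the sections of $\RsubD^{a_{Q}}(\EsubS(\mathcal A))$ are the locally constant functions, essentially $\RsubC(\EsubS(\mathcal A))$, and what fails there is completeness in the internal sense, not completeness of a classical sup-metric space of constants --- a further sign that the internal/external distinction cannot be elided.
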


\begin{proposition}\label{PR29}
A Cauchy sequence of locally linear quantum real numbers 
$\{ a_{Q}(V_{k})\} \in \AAA^{a_{Q}}$, for a fixed operator $\hat A$, converges to a locally linear quantum real number $a_{Q}(W)$ for some open subset $W$.\end{proposition}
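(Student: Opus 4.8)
The plan is to derive existence of a limit from the metric completeness of $\RsubD^{a_{Q}}(\EsubS(\mathcal A))$ proved in Proposition~\ref{PR28}, and then to verify that this limit is again a locally linear qr-number of the special form $a_{Q}(W)$. The decisive simplification is that the operator $\hat A$ is held fixed along the whole sequence: by Corollary~\ref{CR7} the function $a_{Q}$ is one fixed continuous real-valued function on all of $\EsubS(\mathcal A)$, and each term $a_{Q}(V_{k})$ is simply its restriction to $V_{k}$. Thus the only datum that varies with $k$ is the extent $V_{k}$, never the underlying function.

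First I would invoke Proposition~\ref{PR28}: provided $\hat A$ is not a constant operator, $\RsubD^{a_{Q}}(\EsubS(\mathcal A))$ is a complete metric space, so the Cauchy sequence $\{a_{Q}(V_{k})\}$, viewed inside $\RsubD^{a_{Q}}$, converges to a section $\ell$ defined to some extent $W \in \mathcal{O}(\EsubS(\mathcal A))$. Under the identification $\RsubD^{a_{Q}}(\EsubS(\mathcal A)) \equiv C(\EsubS(\mathcal A))$ by germs of continuous functions, convergence in this metric is germ-wise: at each $\rho \in W$ the germ of $\ell$ at $\rho$ is the limit of the germs of $a_{Q}(V_{k})$ at $\rho$.

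The next step identifies $W$ and pins down $\ell$. Because every term is a restriction of the single function $a_{Q}$, whenever $\rho \in V_{k}$ the germ of $a_{Q}(V_{k})$ at $\rho$ is exactly the germ of $a_{Q}$ at $\rho$, independently of $k$. Hence at any point lying in $V_{k}$ for all large $k$ the sequence of germs is eventually constant, equal to the germ of $a_{Q}$, and so its limit is that same germ. Taking $W$ to be the open set on which the extents stabilise --- for an increasing sequence simply $W = \bigcup_{k} V_{k}$, and in general the interior of $\liminf_{k} V_{k} = \bigcup_{N} \bigcap_{k \geq N} V_{k}$ --- gives $\ell = a_{Q}$ on a neighbourhood of every $\rho \in W$. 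By the definition of local linearity for the fixed operator $\hat A$ this is precisely the statement $\ell = a_{Q}(W) \in \AAA^{a_{Q}}$, which completes the argument.

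I expect the main obstacle to be the passage between the abstract metric completeness of Proposition~\ref{PR28} and this concrete germ picture: one must confirm that the notion of a convergent Cauchy sequence of sections over the varying domains $V_{k}$ genuinely yields an open extent $W$ together with the germ-wise limit above, rather than only an abstract point of a completion. Once $W$ is correctly recognised as governed by eventual containment $\rho \in V_{k}$, the identification of $\ell$ with $a_{Q}|_{W}$ is automatic, since no nonlinear function of $a_{Q}$ --- the only way to leave $\AAA^{a_{Q}}$ and land in the larger sheaf $\RsubD^{a_{Q}}$ --- can arise when the defining operator is never varied.
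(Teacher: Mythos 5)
The paper states Proposition~\ref{PR29} without any proof at all --- like Proposition~\ref{PR28} it is simply asserted, with the surrounding text deferring to Stout's treatment of the Dedekind reals in $\Shv(X)$ --- so there is no official argument to compare yours against. Your overall strategy (extract a limit from the completeness asserted in Proposition~\ref{PR28}, then use the fact that every term of the sequence is a restriction of the single global function $a_{Q}$ to force the limit to be $a_{Q}|_{W}$) is the natural route and very likely the intended one.

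However, the step you yourself flag as ``the main obstacle'' is a genuine gap, and your proposed resolution does not close it. The metric of Proposition~\ref{PR28} is the internal metric of the sheaf, and neither the paper nor your argument defines what it means for sections over \emph{varying} extents $V_{k}$ to form a Cauchy sequence: the distance $|a_{Q}(V_{k}) - a_{Q}(V_{j})|$ exists only as a section over $V_{k}\cap V_{j}$, and since both terms restrict the same function it is identically zero there. Hence, on the naive reading, every sequence $\{a_{Q}(V_{k})\}$ with fixed $\hat A$ is Cauchy, and the entire content of the proposition sits in producing the correct $W$ --- which your germ argument does not do in general. Concretely, if $V_{k}$ alternates between two disjoint open sets $U$ and $V$, your candidate $W = \mathrm{int}\bigl(\liminf_{k}V_{k}\bigr)$ is empty, yet nothing in your argument excludes such a sequence; and even in the increasing case you show only that the germs stabilise pointwise on $W=\bigcup_{k}V_{k}$, not that $a_{Q}(V_{k}) \to a_{Q}(W)$ in the metric whose completeness you invoked --- at every finite stage the inequality $|a_{Q}(V_{k})-a_{Q}(W)|<\epsilon$ holds only to extent $V_{k}$, never to the full extent $W$. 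To repair this you must first fix a definition of Cauchy and of convergence for sections with moving domains (for instance by requiring eventual containment of the extents, or by working in the \'etale space of $C(\EsubS(\mathcal A))$) and then verify that the $W$ you name is the extent of the limit delivered by completeness. A minor further point: Proposition~\ref{PR28} gives completeness only when $\hat A$ is not a constant operator, a hypothesis absent from the statement of Proposition~\ref{PR29} that your proof uses and should therefore make explicit.
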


The algebraic properties of $\AAA^{a_{Q}}$ are inherited from those of $\AAA$. When the operator satisfies $\hat A > 0 \lor \hat A < 0 $ then $a_{Q}$ is invertible, and $\AAA^{a_{Q}}$ is an apartness field. It is also a residue field.

 Between any pair of rational numbers $r < s$ there exists a locally linear quantum real number in $\AAA^{a_{Q}}$, namely, $a_{Q}(W)$ where $W = \{ \rho : r < Tr \rho \hat A < s\}$. Furthermore if the numerical range of $\hat A$ is $\RR$ then the quantum real numbers $\RsubD^{a_{Q}}(\EsubS(\mathcal A))$ constitute a Dedekind real numbers object. With this real number continuum we will be able to define functions with values in $\RsubD^{a_{Q}}(\EsubS(\mathcal A))$ and use an integral and differential calculus of these functions.
 
\subsection{Some conceptual enigmas viewed from a qr-numbers perspective. }

\subsubsection{Statistics:  ontological and epistemological conditions}
  
The physical interpretation of the conditions of a quantum system builds upon the interpretation \cite{jauch} that the state of a quantum system is the result of the preparation of the system.  In the qr-number model the condition of a system encapsulates its prior history. This history may include experimental preparations as well as non-experimentally controlled interactions. Each system always has an ontic  condition which allows it to have a deterministic evolution.
\begin{definition} The ontological conditions of the system whose qualities are represented by operators in $dU(\mathcal{E}(\mathcal{G})))$ are given by the open sets $W$ in the weak topology on $ \EsubS(dU(\mathcal{E}(\mathcal{G})))$.
\end{definition}
 If a system is experimentally prepared in an open set $W$ of state space, then the epistemological condition of the system is $W$.  The ontological condition of the system can be any non-empty open set $V \subset  W$ because the values of qualities defined to extent $V$ will satisfy the experimental restrictions imposed on qualities defined to extent $W$. Usually the epistemic condition holds for an ensemble of identically prepared systems, while any member of the ensemble will have an ontic condition $V \subset  W$. This leads to an ignorance interpretation for the statistics.  We have shown \cite{durt2} that if the epistemological condition for an ensemble of systems is the open set $\nu(\rho_{0}, \delta)$ for positive $\delta \ll 1$ then the outcomes of a dichotomic experiment are well approximated by expectation values evaluated at the quantum state $\rho_{0}$ from which Born's quantum probability rule is obtained, see $ \S II$ of \cite{durt2}.
 
\subsubsection{Measurement, approximating qr-numbers by classical real numbers}
The relation of the invisible qr-numbers to observable standard real numbers is elucidated through the single slit experiment, using the fact that the standard real numbers can be realised as locally constant qr-numbers. We will use the concept of $\epsilon$ sharp collimation of an attribute $\hat Z$ in a slit $I_{z}$, with midpoint $z_{0}$ and width $|I_{z}|$, when the system has the condition $W$. 
For $0 < \epsilon < 1$,  $\hat Z$ is  $\epsilon$ sharp collimation in the slit  $I_{z}$ if  $z_{Q}(W) \in I_{z}  \; \text{and}  \; (z^{2}_{Q}(W) - z_{Q}(W)^{2}) \leq  \frac{\epsilon}{4}|I_{z}|^{2}$.  That means that $z_{Q}(W)$ equals the standard real number $z_{0}$ on $W$ with an error less than or equal to $\epsilon$. 

The $\epsilon$ sharp collimation  of $\hat A$ depends upon the spectrum of $\hat A$.
The construction of the open sets $W\in \mathcal{O}(\EsubS(\mathcal M))$ on which $\hat A$ is $\epsilon$ sharp collimation in $I $ requires $\hat A$ to have spectrum in $I$ \cite{JVC2}.

If $W$ is an epistemological condition of the system, in any run of the experiment the system will have an ontic condition $U$, if $U\cap W \neq \emptyset $ then the attribute $\hat Z$ registers the measured value $z_{0}\in \RR$, if $U \cap W = \emptyset$ then $\hat Z$ doesn't. A system in different ontic conditions may have observationally different outcomes for attributes other than $\hat Z$. 

The measurement problem arises because quantum mechanics predicts only a probability distribution of the values obtained by measuring a physical quantity on an ensemble of systems which are all prepared identically.  But probabilities can only be determined if each outcome can be observationally distinct.  This is not obtained in the standard QM description for which the final reduced state of both the system and measurement apparatus is a mixed state. 

\subsubsection{Locality in qr-number space}
Our understanding of the geometry of physical space starts from Riemann, who
isolated two hypotheses in his 1854 lecture \cite{riemann} entitled {\em On the
hypotheses which lie at the foundations of geometry}: 1. A topological
hypothesis: locations are fixed by allocating multiplets of real numbers. The
possibility of using different real number systems was not explored. 2. A
metrical hypothesis: the distance between located points is given by a metric
function.

Riemann assumed that the metric was not given once and for all but had to be physically determined because it was ``causally connected with matter''. This led to Riemannian geometries.   For the topology, we assume that the real number system is not given once and for all but must also be ``physically determined'', leading to Dedekind reals in different topoi.

For example, if the algebra of attributes for a single massive Galilean relativistic quantum particle is
 $\mathcal{A} = dU(\mathcal{E}(\mathcal{G})))$ then the quantum space is $\RsubD(\EsubS(\mathcal{A}))^{3}$ \cite{JVC3}.  The cartesian coordinate axes are parametrized by $\AAA^{x^{j}_{Q}}(\EsubS(\mathcal{M}))$ generated by the position operators $\hat X^{j}$, for $j= 1,2,3$, that  transform appropriately under the Euclidean subgroup of the symmetry group $G$.   The triplets $\vec x_{Q}(W)$ for $W \in \mathcal{O}(\EsubS(\mathcal{A}))$ do not label points, they are open sets because any section $a_{Q}(W)$ is open in $\RsubD(\EsubS(\mathcal{A}))$ by the construction of its topology. They are like Russell's events \cite{russell}. The graph of $\vec x_{Q}(W)$ is $(W, O_{\vec x})$ with $ O_{\vec x}$ an open subset of the standard Euclidean space $\RR^{3}$. 

\subsubsection{Evidence that the spatial continuum of quantum phenomena is different from the classical.}
(a) In the Einstein-Podolsky-Rosen-Bohm-Bell experiments for two spin one-half massive particles, the particles are prepared so that the sums of their momenta and their spins are both zero. They are sent to two Stern-Gerlach apparatuses, $B_{R},\;B_{L} $, a large distance apart, whose magnetic fields can be set independently in directions $\vec u(R),\; \vec u(L)$.

If quantum space is assumed to be classical and each particle is assumed to arrive at one of  $B_{R},\;B_{L} $ then the experiment contravenes Einstein locality because changing the direction of the magnetic field in one apparatus affects the particle in the other.
  
The qr-numbers approach maintains Einstein locality in the quantum space of the entangled two particle system because the single particles are always close to each other in their qr-number space and hence can always interact \cite{JVC1}. It gives the usual quantum mechanical results for the experiment. In a suitably prepared condition the qr-number trajectory for each particle goes both to $B_{L}$ and up along $ \vec u_{L}$ and to $B_{R}$ and down along $\vec u_{R}$ to extent  $\tilde W_{L} ^{+} \cup \tilde W_{R} ^{-}$, and each goes both to $B_{L}$ and down along $ \vec u_{L}$ and to $B_{R}$ and up along $\vec u_{R}$ to extent $\tilde W_{L} ^{-} \cup \tilde W_{R} ^{+}$. A particle with ontic condition $V_{n}$ won't register in a detector in the upper sector of $B_{R}$ unless $V_{n} \cap \tilde W_{R} ^{+} \ne \emptyset$. 

(b) Double slit experiments. A single particle localised in classical space cannot simultaneously pass through two spatially separated slits. Nevertheless the experimental evidence \cite{pworld,tono,merli} clearly shows single electrons building up an interference pattern. This building up process is described deterministically in the qr-number model \cite{JVC2}. 

In the qr-numbers interpretation a quantum particle can simultaneously pass through two slits $I_{z}^{+},\; I_{z}^{-}$ that are separated in classical space, $I_{z}^{+}\cap I_{z}^{-} = \emptyset $, when its single location in qr-number space is an open set that is the union of disjoint parts $z_{Q}(V_{+})$ and $z_{Q}(V_{-})$ with $z_{Q}(V_{+}) \subset I_{z}^{+}$ and $z_{Q}(V_{-}) \subset I_{z}^{-}$ but
$\hat Z$ is not $\epsilon$ located in either  $I_{z}^{+},\text{or} I_{z}^{-}$.The particle has a single trajectory in qr-number space that passes through the double slits and arrives at a single classical location on the detector screen. If the de Broglie relation $p_{y} = \frac{h}{\lambda_{y}}$ between the momentum $ p_{y}$ and a wave-length $\lambda_{y}$ is assumed then the difference between the qr-number lengths of the paths is approximately $\lambda_{y}$ when a particle arrives in the vicinity of the first maxima of the interference pattern. 

\subsubsection{ A deterministic  violation of Bell's theorem} Consider the Bell-Bohm experiment as a measurement of the first kind of the two particle attribute $ \hat C(\vec u_{L}, \vec u_{R}) = \sigma_{1}. \vec u_{L}\otimes \sigma_{2}. \vec u_{R}$ in which the measurement is repeated for many pairs, all prepared in the epistemic condition $W_{0}(\epsilon) = \nu (\rho_{0};\epsilon)$.
The qr-number value of the attribute $ \hat C(\vec u_{L}, \vec u_{R})$ for the $n^{th}$ pair in the condition $V_{n}(1,2)\subset W_{0}(\epsilon)$ is $c(\vec u_{L}, \vec u_{R})_{Q}(V_{n}(1,2))$. Now  $ \hat C(\vec u_{L}, \vec u_{R})^{2} = \hat I(1,2)$, that is  $ \hat C(\vec u_{L}, \vec u_{R})$ is a symmetry and $\|\hat C(\vec u_{L}, \vec u_{R})\| = 1$. Therefore  there exists a projection operator $\hat E(\vec u_{L}, \vec u_{R})$ such that $2\hat E(\vec u_{L}, \vec u_{R}) = \hat C(\vec u_{L}, \vec u_{R})+ \hat I(1,2)$.

With the ``ergodic assumption'', that $N$ independent measurements on one system are equivalent to one measurement on $N$ independent systems,  we can prove
\begin{theorem}
If the system is in the ontic  condition $V_{n}(1,2) = \nu (\rho_{n};\delta_{n}),\; \delta_{n}\ll \epsilon$,
then, up to $\delta_{n}$, the qr-number value of $\hat C(\vec u_{L}, \vec u_{R})$ in the condition $V_{n}(1,2)$ is approximate  equals
$Tr \rho_{n} \hat C(\vec u_{L}, \vec u_{R})$. That is, 
\begin{equation}|c(\vec u_{L}, \vec u_{R})_{Q}(V_{n})\ -\ Tr(\rho_{n} \hat C(\vec u_{L}, \vec u_{R}))|\
< \delta_{n}.\end{equation}  
\end{theorem}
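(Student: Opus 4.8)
The plan is to reduce the asserted qr-number inequality to an elementary trace-norm estimate that exploits the boundedness of the correlation operator. Writing $\hat C$ for $\hat C(\vec u_{L}, \vec u_{R})$, recall that the locally linear qr-number $c(\vec u_{L}, \vec u_{R})_{Q}(V_{n})$ is by definition the section of $\AAA(\EsubS(\mathcal A))$ given by the function $\rho \mapsto Tr(\rho\,\hat C)$ restricted to the basic open set $V_{n} = \nu(\rho_{n};\delta_{n})$, whereas $Tr(\rho_{n}\hat C)$ is a single standard real number, hence a constant qr-number on $V_{n}$. In the internal logic of $\Shv(\EsubS(\mathcal A))$ the displayed inequality is read as a proposition whose extent is an open subset of $V_{n}$; it holds to the full extent $V_{n}$ precisely when the corresponding pointwise inequality is satisfied at every $\rho \in V_{n}$. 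So the proof target is the pointwise statement.

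First I would record the two structural inputs already supplied in the setup. Since $\hat C^{2} = \hat I(1,2)$, the operator $\hat C$ is a bounded self-adjoint symmetry with $\|\hat C\| = 1$; being bounded, it falls into the first of the two special classes identified after the lemma on $G_{\hat A}$, for which $\mathcal P_{\hat T}(\hat C) \leq \|\hat C\|$ and the trace functional obeys the standard estimate $|Tr(\hat C\hat T)| \leq \|\hat C\|\,Tr|\hat T|$. In particular $c_{Q}$ is a well-defined continuous locally linear qr-number, so $c(\vec u_{L}, \vec u_{R})_{Q}(V_{n})$ makes sense as a section over $V_{n}$.

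The central step is then immediate. For any $\rho \in V_{n}$ the defining condition of the basic open set gives $Tr|\rho - \rho_{n}| < \delta_{n}$, whence
\[
\bigl|Tr(\rho\,\hat C) - Tr(\rho_{n}\,\hat C)\bigr|
= \bigl|Tr\bigl((\rho - \rho_{n})\hat C\bigr)\bigr|
\leq \|\hat C\|\,Tr|\rho - \rho_{n}| < \delta_{n}.
\]
This is exactly the well-known inclusion $\nu(\rho_{n};\delta_{n}) \subset \scrN(\rho_{n};\hat C;\delta_{n})$ specialised to $K = \|\hat C\| = 1$. As the strict inequality holds at every $\rho \in V_{n}$, the proposition $|c(\vec u_{L}, \vec u_{R})_{Q}(V_{n}) - Tr(\rho_{n}\hat C)| < \delta_{n}$ is true to the full extent $V_{n}$, which is the claim.

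The estimate is routine once boundedness is in hand; the step needing the most care is not computational but conceptual, namely the translation between the internal (topos) reading of the displayed inequality and the external pointwise estimate — remembering that $Tr(\rho_{n}\hat C)$ enters as a \emph{constant} qr-number and that ``approximate equality up to $\delta_{n}$'' means the extent of the strict inequality is all of $V_{n}$. A secondary point, which situates the theorem within the Bell–Bohm discussion rather than within the one-line estimate, is the role of the ergodic assumption: it is what licenses reading the per-trial bound on each ontic condition $V_{n} \subset W_{0}(\epsilon)$ as an ensemble statement about repeated independent measurements, so that the family of sharp local approximations $c(\vec u_{L}, \vec u_{R})_{Q}(V_{n}) \approx Tr(\rho_{n}\hat C)$ can be assembled into the correlation used to violate the Bell inequality deterministically.
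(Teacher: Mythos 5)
Your proposal is correct, but it takes a genuinely different route from the paper. The paper's proof is a one-line deferral: it introduces the projection $\hat E(\vec u_{L},\vec u_{R})$ with $2\hat E = \hat C + \hat I(1,2)$ and then invokes the proof of Theorem 5 of the earlier collimation paper \cite{durt2}, with $\hat E$ substituted for the projection $\hat P_{i}(1)$ appearing there; the estimate is thus inherited from the dichotomic-measurement machinery developed in that reference. You instead give a self-contained argument: since $\hat C^{2}=\hat I$ forces $\|\hat C\|=1$, the trace-norm duality bound $|Tr((\rho-\rho_{n})\hat C)| \leq \|\hat C\|\,Tr|\rho-\rho_{n}| < \delta_{n}$ holds pointwise on $\nu(\rho_{n};\delta_{n})$, which is exactly the inclusion $\nu(\rho_{1};\delta)\subset \scrN(\rho_{1};\hat A;K\delta)$ with $K=\|\hat A\|$ that the paper records earlier for bounded operators. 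Your version is more elementary and arguably sharper: passing through $\hat E$ as the paper does introduces a factor of $2$ (since $Tr\rho\hat C = 2\,Tr\rho\hat E - 1$) that would have to be absorbed into the informal ``up to $\delta_{n}$,'' whereas your direct estimate lands on the stated bound $<\delta_{n}$ exactly. What the paper's route buys is continuity with the framework of \cite{durt2} on which the subsequent corollary and the ergodic/ensemble discussion rest; what yours buys is a proof readable without the external reference, together with the correct internal-logic reading of the inequality as a proposition holding to full extent $V_{n}$, which the paper leaves implicit.
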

\begin{proof}Let $\hat E(\vec u_{L}, \vec u_{R})$ be the projection operator of $\hat C(\vec u_{L}, \vec u_{R})$. Then the proof of Theorem 5 in \cite{durt2} can be used with $\hat E(\vec u_{L}, \vec u_{R})$ replacing the projection operator $\hat P_{i}(1) $.  
\end{proof}
Using the theory of independent errors, the most probable value of a set of  measurements is their arithmetical mean, as $\forall n\; : V_{n}(1,2)\subset W_{0}(\epsilon)$ and $|Tr \rho_{n} \hat C(\vec u_{L}, \vec u_{R}) -  Tr \rho_0 \hat C(\vec u_{L}, \vec u_{R})| < \epsilon$ the arithmetic mean will be approximately $Tr \rho_0 \hat C(\vec u_{L}, \vec u_{R})$. 
\begin{corollary} When $\rho_0 = \hat P_{\Psi_{\pm}^{s_1,s_2}}(1,2)$ then $Tr \rho_0 \hat C(\vec u_{L}, \vec u_{R}) = - \vec u_{L} \cdot \vec u_{R}$. Therefore in this qr-number deterministic model,  Bell's theorem is violated in accordance with experiment.\end{corollary}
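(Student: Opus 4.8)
The plan is to establish the numerical identity first and then read off the violation from the theorem immediately preceding. For the identity, note that since $\rho_0 = \hat P_{\Psi_{\pm}^{s_1,s_2}}(1,2)$ is the rank-one projection onto the entangled state $\Psi_{\pm}^{s_1,s_2}$, the trace collapses to an expectation value, $Tr \rho_0 \hat C(\vec u_{L}, \vec u_{R}) = \langle \Psi_{\pm}^{s_1,s_2} | (\sigma_{1}\cdot \vec u_{L} \otimes \sigma_{2}\cdot \vec u_{R}) | \Psi_{\pm}^{s_1,s_2}\rangle$. First I would expand both factors in a fixed orthonormal basis, $\sigma_{1}\cdot \vec u_{L} = \sum_i u_{L}^i \sigma_{1}^i$ and likewise for the right particle, so the expectation becomes $\sum_{i,j} u_{L}^i u_{R}^j \langle \sigma_{1}^i \sigma_{2}^j\rangle$. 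The only nontrivial input is the two-point spin correlation of the entangled state; for the singlet one writes $\Psi = \tfrac{1}{\sqrt 2}(|{+-}\rangle - |{-+}\rangle)$ and applies the Pauli matrices componentwise to obtain $\langle \sigma_{1}^i \sigma_{2}^j\rangle = -\delta_{ij}$ (the other Bell states giving the same up to an overall sign). Substituting yields $\sum_{i,j} u_{L}^i u_{R}^j(-\delta_{ij}) = -\sum_i u_{L}^i u_{R}^i = -\vec u_{L}\cdot \vec u_{R}$, which is the first assertion.

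Next I would connect this value to the measured qr-number correlations through the theorem just proved. That theorem gives, for each pair in its ontic condition $V_{n}(1,2) = \nu(\rho_{n};\delta_{n})$, the estimate $|c(\vec u_{L}, \vec u_{R})_{Q}(V_{n}) - Tr(\rho_{n}\hat C(\vec u_{L}, \vec u_{R}))| < \delta_{n}$. Since every $V_{n}(1,2) \subset W_{0}(\epsilon) = \nu(\rho_{0};\epsilon)$, the continuity of $a_{Q}$ in the trace-norm topology (Corollary \ref{CR7}) controls $|Tr \rho_{n}\hat C - Tr \rho_{0}\hat C|$ by a multiple of $\epsilon$. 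Invoking the ergodic assumption and the theory of independent errors, the arithmetic mean of the measured values $c(\vec u_{L}, \vec u_{R})_{Q}(V_{n})$ over the ensemble then lies within a quantity of order $\epsilon + \max_n \delta_{n}$ of $Tr \rho_{0}\hat C(\vec u_{L}, \vec u_{R}) = -\vec u_{L}\cdot \vec u_{R}$. Thus the deterministic qr-number model reproduces the standard quantum correlation $E(\vec u_{L}, \vec u_{R}) = -\vec u_{L}\cdot \vec u_{R}$.

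To exhibit the violation I would feed these correlations into the CHSH combination. Choosing four settings $\vec u_{L}, \vec u_{L}', \vec u_{R}, \vec u_{R}'$ at the canonical relative angles and evaluating $S = E(\vec u_{L}, \vec u_{R}) - E(\vec u_{L}, \vec u_{R}') + E(\vec u_{L}', \vec u_{R}) + E(\vec u_{L}', \vec u_{R}')$ with $E(\vec u, \vec v) = -\vec u\cdot \vec v$ gives $|S| = 2\sqrt 2 > 2$, exceeding the bound that any classical local hidden-variable model must satisfy. Because these correlations were produced by a model in which every pair possesses a definite ontic condition $V_{n}(1,2)$ and hence a definite qr-number value for $\hat C(\vec u_{L}, \vec u_{R})$, the violation is genuinely deterministic.

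The main obstacle is not the trace computation, which is routine, but making precise why a deterministic model may violate Bell's inequality without contradiction. The resolution I would invoke is that Bell's derivation presupposes that the hidden variable and the outcomes take values in the standard reals and factorize through spatially separated local observables. In the qr-number model the values are sections of $\RsubD(\EsubS(\mathcal{A}))$, and, as established in the locality subsection, the two particles remain close in qr-number space, their single joint location being an open set on which the factorization into classically separated parts fails. Hence the separability premise underlying the Bell bound is simply not met, and the care required lies in the bookkeeping of the approximations $\delta_{n}$ and $\epsilon$ in the ensemble-average step rather than in any single algebraic identity.
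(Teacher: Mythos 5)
Your computation of the identity is the standard one and is correct for the singlet: expanding $\hat C$ in components and using $\langle\sigma_{1}^{i}\sigma_{2}^{j}\rangle_{\Psi_{-}}=-\delta_{ij}$ gives $Tr\,\rho_{0}\hat C=-\vec u_{L}\cdot\vec u_{R}$. The paper itself supplies no proof of this corollary at all --- it states the trace identity as known and lets the surrounding discussion (the preceding theorem, the ergodic assumption, and the arithmetic-mean argument that follows the corollary) carry the claim that the model reproduces the quantum correlation. So your proposal is not so much a different route as a completed version of an argument the paper leaves implicit, and your second paragraph tracking the $\delta_{n}$ and $\epsilon$ errors through the ensemble average is exactly the bookkeeping the paper performs in the text immediately after the corollary. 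Where you genuinely go beyond the paper is in making the word ``violated'' precise: the paper never exhibits a Bell-type inequality, whereas you feed $E(\vec u,\vec v)=-\vec u\cdot\vec v$ into the CHSH combination at the canonical angles to get $|S|=2\sqrt2>2$, and you correctly identify that the consistency of determinism with this violation rests on the failure of Bell's separability premise in qr-number space (the point the paper makes only qualitatively in its locality subsection). One slip to correct: your parenthetical that ``the other Bell states give the same up to an overall sign'' is false --- the triplet $\Psi_{+}=\tfrac{1}{\sqrt2}(|{+-}\rangle+|{-+}\rangle)$ has correlation matrix $\mathrm{diag}(1,1,-1)$, so $\langle\hat C\rangle=u_{L}^{x}u_{R}^{x}+u_{L}^{y}u_{R}^{y}-u_{L}^{z}u_{R}^{z}$, which is not $\pm\vec u_{L}\cdot\vec u_{R}$ for general settings. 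The identity as stated holds for the total-spin-zero preparation the paper describes (the singlet), and the notation $\Psi_{\pm}^{s_{1},s_{2}}$ should be read in that light rather than as ranging over all four Bell states.
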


Let $c_{n} = Tr \rho_{n} \hat C(\vec u_{L}, \vec u_{R}),\; n = 1,2,......N $ and $c_{0} =Tr \rho_{0} \hat C(\vec u_{L}, \vec u_{R}) $ then $|c_{n} - c_{0}| < \epsilon$ so if we put $c_{n} = c_{0} + \Delta_{n}$ then $|\Delta_{n}| < \epsilon$ for all $n$. Now the arithmetic mean of the $N$ measurements is $\sum_{\alpha = 1}^{N} \frac{c_{n}}{N} = c_{0} + \sum \frac{\Delta_{n}}{N} < c_{0} + \frac{\epsilon}{N}$ so if $\epsilon \ll 1$ and $N$ is large then the arithmetic mean is $c_{0}$ to a very good approximation.

\subsubsection{The covariance of qr-number values}
 
The position vector attribute of a particle with mass $m > 0$ and spin $s$ transforms covariantly under the Euclidean subgroup of its symmetry group. That is,  $\hat X_{i}^{\prime} = \mathcal{U}(R, \vec a)\hat X_{i} \mathcal{U}^{-1}(R, \vec a) =\sum_{j=1}^{3}R_{i j}\hat X_{j} + a_{i}$,  for $\vec a \in \RR^{3}$ and $R \in SO(3)$  this leads to a system of imprimitivity \cite{mack} for $\mathcal{U}(E)$. 

The covariance of a quality induces the covariance of its qr-number values,
\begin{equation}
\vec x_{Q}(W)^{\prime} = \vec x^{\prime}_{Q}(W) =  (R \vec x)_{Q}(W) + \vec a 1_{Q}(W) = \vec x_{Q}(W^{\prime})
\end{equation} if $1_{Q}(W)$ is the constant unit value qr-number on $W$ and $W^{\prime} = \mathcal{U}^{-1}(R, \vec a) W \mathcal{U}(R, \vec a)$.

At the microscopic level a quantum system can undergo a symmetry transformation in which the elements of  the group are given by qr-numbers, directly causing a transformation of the qr-number values of the qualities. Consider a qr-number realisation of the Euclidean group. For any $V \in \mathcal O(\EsubS(\mathcal{A})$ there is a spatial translation by $ \vec x_{Q}(V)$ that sends  $ \vec x_{Q}(W) \to  \vec x_{Q}(W) +  \vec x_{Q}(V)$ and for any matrix $R_{Q}(V) \in SO(3)( \RsubD(\EsubS(\mathcal A)))$, defined to extent $V$, that satisfies the conditions $R^{t}_{Q}(V) R_{Q}(V) = I_{Q}(V)\;, \text{det}  R_{Q}(V) = 1_{Q} (V)$, there is a rotation about the origin of the qr-number space given by $ \vec x_{Q}(W) \to  R_{Q}(V)\vec x_{Q}(W)$ for any $W \in O(\EsubS)$.

Therefore under the qr-number Euclidean transformation $(\vec x_{Q}(U), R_{Q}(V))$, the vector $\vec x_{Q}(W)$ transforms to  
\begin{equation}
\vec x_{Q}^{\prime}(W^{\prime}) =   R_{Q}(V))\vec x_{Q}(W) + \vec x_{Q}(U)
\end{equation}
where $W^{\prime} = (V \cap W)\cup U$.
Since the product of $R_{Q}(V)$ with $\vec x_{Q}(W)$ is only non-zero if $W \cap V\neq \emptyset $ the $R_{Q}(V)$ only rotates particles located at positions $\vec x_{Q}(U)$ with $U\cap V \neq \emptyset$.

\subsubsection{Heisenberg Inequalities for qr-numbers}This gives a limitation on the accuracy with which the qr-number values of two attributes represented by non-commuting operators can be simultaneously approximated by standard real number values on an open set $W$.  
Let $\hat A,\; \hat B \in dU(\mathcal{E}(\mathcal{G}))$ be essentially self adjoint on  $\mathcal{D}^{\infty}(U) \subset \mathcal{H}$.
\begin{theorem} If both $\hat A$ is $\epsilon$-sharp collimated in $I_{a}$ and $\hat B$ is $\epsilon$-sharp collimated in $I_{b}$ when the system is the condition $W$ and  $\imath \hat C = [\hat A, \;\hat B] $ then  \begin{equation} |I_{a}||I_{b}| \geq 2|c_{Q}(W)|/\epsilon \end{equation}\end{theorem}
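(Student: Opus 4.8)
The plan is to reduce the qr-number inequality to the classical Robertson--Schr\"odinger uncertainty relation applied pointwise over the condition $W$, and then to combine it with the variance bounds supplied by $\epsilon$-sharp collimation. Since the order relation on $\RsubD(\EsubS(\mathcal{A}))$ is the pointwise order on sections, it suffices to establish every inequality at each state $\rho \in W$ and read the result off as an inequality of qr-numbers; because only $\leq$ and $\geq$ are used (never trichotomy), the partial nature of the order causes no difficulty.

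First I would record what collimation gives. By the definition of $\epsilon$-sharp collimation of $\hat A$ in $I_a$ on $W$, for every $\rho \in W$
\begin{equation}
s(a)_Q(\rho)^2 = a^2_Q(\rho) - a_Q(\rho)^2 \leq \frac{\epsilon}{4}|I_a|^2,
\end{equation}
and similarly $s(b)_Q(\rho)^2 \leq \frac{\epsilon}{4}|I_b|^2$. Here $a_Q(\rho) = Tr(\rho\hat A)$ and $a^2_Q(\rho) = Tr(\rho\hat A^2)$, so $s(a)_Q(\rho)^2$ is exactly the variance of $\hat A$ in the state $\rho$.

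Next I would establish the pointwise uncertainty relation. Fix $\rho \in W$ and set $\hat A_0 = \hat A - a_Q(\rho)\hat I$ and $\hat B_0 = \hat B - b_Q(\rho)\hat I$, which are again essentially self-adjoint on $\mathcal{D}^{\infty}(U)$ and satisfy $[\hat A_0, \hat B_0] = [\hat A,\hat B] = \imath\hat C$; note that $\hat C = -\imath[\hat A,\hat B]$ is self-adjoint, so $c_Q(\rho) = Tr(\rho\hat C)$ is real. Writing $\rho = \sum_n \lambda_n |\phi_n\rangle\langle\phi_n|$ with $\phi_n \in \mathcal{D} \subset \mathcal{D}^{\infty}(U)$, all of $Tr(\rho\hat A_0^2)$, $Tr(\rho\hat B_0^2)$ and $Tr(\rho\hat A_0\hat B_0)$ are absolutely convergent because $\sum_n \lambda_n\|\hat X\phi_n\|^2 < \infty$ for every $\hat X \in \mathcal{A}$, as in the lemma preceding Theorem~\ref{TH2}. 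A Cauchy--Schwarz argument applied twice, first the vector inequality to each $\langle \hat A_0\phi_n, \hat B_0\phi_n\rangle$ and then the weighted-sum inequality with weights $\lambda_n$, gives $Tr(\rho\hat A_0^2)\,Tr(\rho\hat B_0^2) \geq |Tr(\rho\hat A_0\hat B_0)|^2$; taking the imaginary part, which by cyclicity of the trace equals $\mathrm{Im}\,Tr(\rho\hat A_0\hat B_0) = \tfrac{1}{2}c_Q(\rho)$, yields
\begin{equation}
s(a)_Q(\rho)^2\, s(b)_Q(\rho)^2 \geq \frac{1}{4}|c_Q(\rho)|^2.
\end{equation}

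Finally I would combine the two displays. At each $\rho \in W$ the collimation bounds and the uncertainty relation give $\tfrac{1}{4}|c_Q(\rho)|^2 \leq \tfrac{\epsilon^2}{16}|I_a|^2|I_b|^2$; since every term is non-negative, taking square roots and rearranging produces $|I_a||I_b| \geq 2|c_Q(\rho)|/\epsilon$, which is the asserted inequality of qr-numbers on $W$. The algebra of this last step is routine; the one point that needs care, and the main obstacle, is justifying the Cauchy--Schwarz/Robertson step for the \emph{unbounded} operators $\hat A_0, \hat B_0$, namely checking that the traces converge and that the commutator identity holds on the common invariant domain $\mathcal{D}^{\infty}(U)$ rather than merely formally. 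This is secured by the invariance of $\mathcal{D}^{\infty}(U)$ under $\mathcal{A} = dU(\mathcal{E}(\mathcal{G}))$ together with the rapid decay of the eigenvalues $\lambda_n$ of any state, which makes all the series absolutely convergent.
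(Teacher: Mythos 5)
Your proof is correct: combining the pointwise Robertson inequality $s(a)_Q(\rho)\,s(b)_Q(\rho) \geq \tfrac{1}{2}|c_Q(\rho)|$ for mixed states with the variance bound $s(a)_Q(W)^2 \leq \tfrac{\epsilon}{4}|I_a|^2$ that the paper itself extracts from the definition of $\epsilon$-sharp collimation gives exactly $|I_a||I_b| \geq 2|c_Q(W)|/\epsilon$, and your care about domains, convergence of the traces, and the purely pointwise use of the order on sections is exactly what is needed. The paper states this theorem without supplying a proof (it is imported from the cited collimation paper of Corbett and Durt), but your argument is the standard route and is evidently the intended one.
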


Since the width of a slit gives
a measure of the accuracy of the approximate standard real number values when the quantity is $\epsilon$-sharp collimated in it, the Heisenberg's uncertainty principle limits in accuracy at which the attributes can be realised simultaneously in the condition $W$.

\begin{corollary}\cite{durt2}
Let $\hat Q$ and $\hat P$ represent the position and its conjugate momentum of a
massive particle, i.e., $\imath [ \hat P, \; \hat Q] = \hbar$, and let $I_{q}$ and $I_{p}$
be slits for the conjugate variables. If a particle in a condition $W$ is $\epsilon$-sharp
collimated through both slits then the product of the widths of the
slits must satisfy,\begin{equation}| I_{q}||I_{p}| \geq \ 2\hbar/ {\epsilon }
\end{equation}
\end{corollary}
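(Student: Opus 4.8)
The plan is to obtain this bound as an immediate specialisation of the preceding Theorem, taking $\hat A = \hat Q$ and $\hat B = \hat P$ so that the slits $I_a$, $I_b$ there become the conjugate slits $I_q$, $I_p$ here. The only genuine work is to identify the operator $\hat C$ from the given canonical commutation relation and then to evaluate its qr-number value $c_Q(W)$; once these are in hand the inequality follows by direct substitution.

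First I would read off $\hat C$. The Theorem is stated with $\imath \hat C = [\hat A, \hat B]$. From the hypothesis $\imath [\hat P, \hat Q] = \hbar$ (understood as $\hbar \hat I$) one gets $[\hat Q, \hat P] = \imath \hbar \hat I$, so with the identification $\hat A = \hat Q$, $\hat B = \hat P$ we have $\imath \hat C = [\hat Q, \hat P] = \imath \hbar \hat I$, whence $\hat C = \hbar \hat I$ is a scalar multiple of the identity. A little sign and factor bookkeeping is needed to be sure the $\hbar$ and the $\imath$ land correctly, and this is the one place where a slip could occur, but it is routine.

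Next I would compute $c_Q(W)$. Because $\hat C = \hbar \hat I$ is a constant operator, its associated function is $c_Q(\rho) = Tr(\rho \hat C) = \hbar\, Tr(\rho) = \hbar$ for every normalised state $\rho \in \EsubS(\mathcal{A})$, since states are normalised to take the value $1$ on $\hat I$. Hence on any open condition $W$ the locally linear qr-number $c_Q(W)$ is the constant qr-number $\hbar 1_Q(W)$, and $|c_Q(W)| = \hbar$ (using $\hbar > 0$). Substituting $|c_Q(W)| = \hbar$ into the Theorem's bound $|I_a||I_b| \geq 2|c_Q(W)|/\epsilon$ gives $|I_q||I_p| \geq 2\hbar/\epsilon$, as required.

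Since $\hat C$ is constant, the resulting bound is independent of $W$, so the hypotheses that $\hat Q$ be $\epsilon$-sharp collimated in $I_q$ and $\hat P$ be $\epsilon$-sharp collimated in $I_p$ on the \emph{same} condition $W$ are exactly the hypotheses of the Theorem, and no further estimate is needed. The only point I would take care to verify before invoking the Theorem is that $\hat Q$ and $\hat P$ genuinely meet its standing assumptions, namely that they are essentially self-adjoint elements of $dU(\mathcal{E}(\mathcal{G}))$ on $\mathcal{D}^{\infty}(U)$; granting this, there is no substantive obstacle, the corollary being a clean specialisation.
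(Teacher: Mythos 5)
Your proposal is correct and is exactly the route the paper intends: the corollary is stated as an immediate specialisation of the preceding theorem (the paper itself gives no separate proof, citing \cite{durt2}), and your identification $\hat A = \hat Q$, $\hat B = \hat P$, $\hat C = \hbar \hat I$, hence $|c_{Q}(W)| = \hbar\,Tr(\rho) = \hbar$ on every normalised condition, is the whole content of the deduction. The sign bookkeeping checks out ($\imath[\hat P,\hat Q]=\hbar$ gives $[\hat Q,\hat P]=\imath\hbar$, so $\hat C=\hbar\hat I$ with either ordering yielding $|c_{Q}(W)|=\hbar$), and your closing remark that the constancy of $\hat C$ makes the bound independent of $W$ is a correct observation consistent with the paper's follow-up discussion.
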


This result determines the minimum area in the classical phase space
that is required if a particle is to be $\epsilon$-sharp collimated
in both the $\hat Q$ and $\hat P$ attributes. This
inequality does not explicitly restrict the qr-numbers
$q_Q(W)$ and $p_Q(W)$ but if $W_{q}$ and $W_{p}$ are the largest open sets on which $\hat Q$ and $\hat P$ are respectively $\epsilon$-sharp collimated in $I_{q}$ and $I_{p}$, then  $W = W_{q} \cap W_{p}$ must be non-empty.  

\subsubsection{The L\"uders-von Neumann transformation rule.\label{ludo}}

In the standard theory, the collapse hypothesis is presented as an
independent postulate governing the behaviour of systems undergoing
measurement. It states that if the system was prepared in the quantum
state $ \rho_0$, then in a measurement of $\hat A$, instead of evolving unitarily,
$ \rho_0$ ``collapses'' to
$ \rho_{0}^{\prime} = { \hat P^{A}(I_{a}) \rho_0 \hat P^{A}(I_{a})\over Tr( \hat P^{A}(I_{a})\rho_0)}$, where $ \hat P^{A}(I_{a})$ is the projection operator of $\hat A$ onto the interval $I_{a}$ and $\hat P^{A}(I_{a})\rho_0 \ne  0$. In this process any $\hat B \in \mathcal{A}$ changes
to ${ \hat P^{A}(I_{a}) \hat B \hat P^{A}(I_{a})\over Tr( \hat P^{A}(I_{a})  \rho_0)}$.

In \cite{durt2} we proved that if the qr-number value $a_Q(W)$ of $\hat A$ is strictly $\epsilon$ sharply collimated in an interval $I_{a}$ then the qr-number value of any attribute changes as if it
had undergone a L\"uders-von Neumann transformation. This does not mean that
the collapse process really occurs, but rather that the collapse
postulate gives a good approximation to any attribute's qr-number value that is obtained in this type of measurement.
\begin{theorem} L\"uders-von Neumann rule for strictly $\epsilon$
sharp preparations \cite{durt2}.
If initially the system is prepared in $W=\nu(\rho_0, \delta)$ and subsequently $\hat A$ is strictly
$\epsilon$ sharply collimated in $I_{a}$ on $U$, then any $\hat B\in \mathcal{A}$ will have the qr-number value $b_Q(U \cap W) \approx (Tr  \rho_{0}^{\prime} \hat B)1_Q(U \cap W)$ with an accuracy proportional to $\delta + 2 \epsilon$.
\end{theorem}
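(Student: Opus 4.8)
The plan is to establish the uniform pointwise estimate
\[
|Tr\rho\hat B - Tr\rho_0'\hat B| \le K(\hat B)(\delta + 2\epsilon) \quad \text{for every } \rho \in U \cap W,
\]
where $\hat P = \hat P^{\hat A}(I_a)$ and $\rho_0' = \hat P\rho_0\hat P/Tr(\hat P\rho_0)$. Since $\rho \mapsto Tr\rho\hat B$ is precisely the locally linear section $b_Q$ restricted to $U \cap W$, while $Tr\rho_0'\hat B$ is a fixed standard real, this inequality is exactly the assertion that $b_Q(U\cap W)$ equals the constant qr-number $(Tr\rho_0'\hat B)1_Q(U \cap W)$ to within an accuracy proportional to $\delta + 2\epsilon$. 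First I would collect the three inequalities valid on $U \cap W$: strict $\epsilon$ sharp collimation supplies $Tr|\rho - \hat P\rho\hat P| < \epsilon$ for all $\rho \in U$; the preceding theorem, that $\epsilon$ sharp collimation implies $\epsilon$-location, supplies $1 - \epsilon < Tr(\hat P\rho) \le 1$ on $U$; and $\rho \in W = \nu(\rho_0,\delta)$ supplies $Tr|\rho - \rho_0| < \delta$, whence also $Tr|\hat P(\rho-\rho_0)\hat P| < \delta$ and $|Tr(\hat P\rho) - Tr(\hat P\rho_0)| < \delta$.

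Next I would introduce the collapse $\rho' = \hat P\rho\hat P/p$ of $\rho$ itself, with $p = Tr(\hat P\rho)$, and pass from $\rho$ to $\rho_0'$ in two trace-norm steps. The collapse of $\rho$ costs $2\epsilon$, since
\[
Tr|\rho - \rho'| \le Tr|\rho - \hat P\rho\hat P| + |1 - p^{-1}|\,Tr(\hat P\rho\hat P) < \epsilon + (1-p) < 2\epsilon,
\]
using $Tr(\hat P\rho\hat P) = p$ and $1 - p < \epsilon$. Comparing the two collapsed states costs only $\delta$ to leading order: writing $\rho' - \rho_0' = p^{-1}\hat P(\rho-\rho_0)\hat P + (p^{-1} - p_0^{-1})\hat P\rho_0\hat P$ with $p_0 = Tr(\hat P\rho_0)$, the first summand has trace norm below $\delta/(1-\epsilon)$ and the second is $O(\delta)$ because $|p^{-1} - p_0^{-1}| = |p_0 - p|/(p p_0) = O(\delta)$ while $Tr(\hat P\rho_0\hat P) = p_0 \le 1$; every correction beyond the leading $\delta$ is of higher order in $\delta$ and $\epsilon$. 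Adding the two steps gives $Tr|\rho - \rho_0'| < \delta + 2\epsilon$ up to higher-order terms.

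It then remains to convert trace-norm closeness into closeness of the values $Tr(\cdot)\hat B$, and this is where Theorem \ref{TH2} is decisive: for an essentially self-adjoint $\hat B = dU(y) \in dU(\mathcal{E}(\mathcal{G}))$ the functional $G_{\hat B}(\hat T) = Tr\hat B\hat T$ is continuous in the trace norm on $\mathcal{T}_1(\mathcal{A})$, so $|Tr\hat B\hat T| \le K(\hat B)\,Tr|\hat T|$ with $K(\hat B)$ independent of $\hat T$. Applying this with $\hat T = \rho - \rho_0'$ delivers the desired $|Tr\rho\hat B - Tr\rho_0'\hat B| \le K(\hat B)(\delta + 2\epsilon)$ uniformly in $\rho \in U \cap W$.

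The hard part will not be the elementary trace-norm bookkeeping but the justification that the truncated operator $\hat P\rho\hat P$, and the differences built from it, again belong to $\mathcal{T}_1(\mathcal{A})$, so that the continuity bound of Theorem \ref{TH2} may legitimately be applied to them for an \emph{unbounded} $\hat B$. This requires the spectral projection $\hat P = \hat P^{\hat A}(I_a)$ to map $\mathcal{D}^{\infty}(U)$ into itself and to satisfy $\hat C\hat P\rho\hat P \in \scrT_1(\scrH)$ for every $\hat C \in \mathcal{A}$; since $\hat P$ commutes with $\hat A$ and $\mathcal{T}_1(\mathcal{A})$ absorbs left factors from $\mathcal{A}$, this holds, but it is the step that genuinely uses the $O^{\ast}$-algebra structure rather than mere positivity of the trace. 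This is the argument carried out for Theorem 5 of \cite{durt2}.
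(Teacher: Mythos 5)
The paper itself does not prove this theorem --- it only cites \cite{durt2} --- so your argument can only be judged on its own terms, and its overall architecture (a two-step trace-norm estimate $\rho \to \rho' \to \rho_0'$ followed by the continuity bound $|Tr\,\hat B\hat T| \leq K(\hat B)\,\nu(\hat T)$ of Theorem \ref{TH2}) is surely the intended route. The bookkeeping is sound: $Tr|\rho-\rho'| < 2\epsilon$ follows correctly from strict collimation plus $\epsilon$-location, and $Tr|\rho'-\rho_0'| = O(\delta)$ follows from $\nu(\rho_0,\delta)$ and the contractivity of $T \mapsto \hat P T\hat P$ in trace norm (your constant on $\delta$ is $2/(1-\epsilon)$ rather than $1$, but ``proportional to $\delta+2\epsilon$'' absorbs that).

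The genuine gap is exactly where you locate it, and your one-line attempt to close it does not work. To apply Theorem \ref{TH2} to $\hat T = \rho - \rho_0'$ (or to $\rho - \hat P\rho\hat P$, which suffers the same defect) you need $\hat P\rho_0\hat P \in \mathcal{T}_1(\mathcal{A})$, hence in particular $\hat P\mathcal{D}^{\infty}(U) \subseteq \mathcal{D}^{\infty}(U)$ and $\hat C\hat P\rho_0\hat P$ trace class for \emph{every} $\hat C \in \mathcal{A}$, not just for those commuting with $\hat A$. Neither follows from ``$\hat P$ commutes with $\hat A$'': a spectral projection of $\hat A$ onto a finite interval generally destroys smoothness (for $\hat A = \hat Q$ on $L^2(\mathbb{R})$ with $\mathcal{D}^{\infty}(U)$ the Schwartz space, $\hat P$ is multiplication by an indicator function, which maps Schwartz functions out of the domain, so $\hat P\hat P^{\hat Q}\rho_0\hat P$ composed with the momentum operator is not even defined on all of $\mathcal{H}$). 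And the absorption property $\mathcal{A}\,\mathcal{T}_1(\mathcal{A}) = \mathcal{T}_1(\mathcal{A})$ only applies to operators already known to lie in $\mathcal{T}_1(\mathcal{A})$, which is precisely what is in question for $\hat P\rho_0\hat P$. So as written your final step is valid for bounded $\hat B$ (where $|Tr\,\hat B\hat T| \leq \|\hat B\|\,Tr|\hat T|$ needs no domain hypotheses) but not for general $\hat B \in \mathcal{A}$; to cover unbounded $\hat B$ you would need an additional hypothesis (e.g.\ $\hat B$ relatively bounded with respect to an operator whose spectral calculus is compatible with $\hat P$, or a smoothed version of the spectral projection), or you must restrict the claim, as the source result in \cite{durt2} does, to projection-valued attributes.
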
 The accuracy is controlled by the choice of $\delta$ and $\epsilon$. If the ontic condition of a particle is $V \subset W$ then the attribute $\hat B$ will have the qr-number value  $b_Q(U \cap V)\approx (Tr  \rho_{0}^{\prime} \hat B)1_Q(U \cap V)$.

\subsubsection{Equations of motion for qr-numbers.}

For any open set $U$, the qr-number values $ (\vec q_{Q}(U)(t),\vec p_{Q}(U)(t)) $  of the position and momentum of a massive quantum particle satisfy classical equations of motion.
Thus, if $ h(\vec q_Q(U)(t),\vec p_Q(U)(t))$ is the qr-number value of the Hamiltonian,
\begin{equation}\label {EQ12}
{dq^{j}_{Q}(U)(t)\over dt} =  
{\partial h(\vec q_{Q}(U)(t),\vec p_{Q}(U)(t))
\over \partial p^{j}_{Q}(U)(t)},
\end{equation}
\begin{equation}\label{EQ13}
{dp^{j}_{Q}(U)(t)\over dt} = - {\partial h(\vec q_{Q}(U)(t),\vec p_{Q}(U)(t))
\over \partial q^{j}_{Q}(U)(t)}.
\end{equation}
$h(\vec q_Q(U)(t),\vec p_Q(U)(t)) = \sum_{j=1}^{3} {1\over 2m}(p^{j}_Q(U)(t))^{2} + V(\vec q_Q(U)(t)) $ and ${d\over dt}$ denotes differentiation with respect to time.

Distinguishing a quality from its qr-number value, when a particle is in a condition $W$, the equations of motion for $\vec q_{Q}(W)(t)$ are Hamiltonian while those for $\vec q(t)_{Q}(W)$ come from averaging Heisenberg operator equations over $W$. There are two possible qr-number equations of motion for the particle, even though for the standard Hamiltonian $\hat H$ the  ${d q^{j}_{Q}(W)\over dt}(t) = ({ dq^{j}(t)\over dt})_{Q}(W) = {1\over m_{j}}p^{j}(t)_{Q}(W)$.  

The different evolutions give the same trajectory, $\vec q_{Q}(W)(t) = \vec q(t)_{Q}(W)$, when the interaction term in the Hamiltonian is linear: examples are the zero force law (free motion) and linear force laws, such as the simple harmonic motion.   Furthermore, for suitably smooth forces, these evolutions are locally indistinguishable in the sense that there exists a class of open subsets $W(\vec x, \epsilon)$ of state space $\EsubS(\mathcal{A})$ on which the averaged values of 
Heisenberg's operator equations closely approximate Hamilton's equations for the qr-number values.  
  
The class $W(\vec x, \epsilon)$ for $\vec x \in \RR^{3}$ and $\epsilon > 0$ do not cover $\EsubS(\mathcal{A})$, but  associated to each
$W(\vec{x},\epsilon)$ is an open ball
$B(\vec{x},\delta)$ in $\bf R^{3}$, the collection of which cover
$\RR^{3}$. An observer measuring a particle
with apparatus set up in one of these open balls
could not determine locally whether the evolution
of the particle was governed by Heisenberg's operator equations
of motion modulated by states $\rho$ from
$W(\vec{x},\epsilon)$ or by Hamilton's equations of motion
for the qr-numbers $\vec{Q}\mid_{W}$ restricted to
the open set $ W = W(\vec{x},\epsilon)$.
\subsection{Conclusions}
The view presented in this paper is that the change in going from a mathematical description of the macroscopic to the microscopic world requires a change in the system of real numbers that are taken as numerical values by the physical qualities in the world. If the Dedekind real numbers hypothesis is correct then there should be a postulate of covariance under change of topos in which the sheaf of Dedekind real number exists because maps between $\RsubD(X)$ and $\RsubD(Y)$ are obtained using functors between the toposes $Shv(X)$ and $Shv(Y)$.
This postulate would be along the lines that the general laws of physics should be expressible in equations which hold good for all systems of Dedekind real numbers. We believe that at least the equations of motion for both quantum and classical particles can be expressed in Dedekind real numbers in both the Lagrangian and Hamiltonian formulations. 

\subsection{Acknowledgments}
The paper is based on much collaborative work \cite{durt2, JVC1,JVC2}  with
Thomas Durt  whose important contributions are acknowledged. I also wish to
thank Ross Street and members of the Centre of Australian Category Theory for
their encouragement and Huw Price and members of the Foundations of QFT group
for interesting discussions. 


\end{document}